\newtheorem{theorem}{Theorem}
\newtheorem{lemma}{Lemma}
\newtheorem{definition}{Definition}
\newtheorem{corollary}{Corollary}
\newenvironment{proof}{\noindent {\bf Proof:\,\ }}{\hfill\mbox{$\Box$}\smallskip}
\newcommand{\scsi}{\mathsf{SCSI}}
\newcommand{\scss}{\mathsf{SCSS}}
\newcommand{\scsc}{\mathsf{SCSC}}
\newcommand{\scsr}{\mathsf{SCSR}}
\newcommand{\scst}{\mathsf{SCST}}
\newcommand{\pdia}{\mathsf{pair\_diagram}}
\newcommand{\reg}{\mathsf{reg}}
\renewcommand{\to}{\mathsf{Type~\text{-}~I}}
\newcommand{\tyt}{\mathsf{Type~\text{-}~II}}
\newtheorem{observation}{Observation}
\newcommand{\remove}[1]{}
\newcommand{\blue}[1]{{\textcolor{blue}{#1}}}
\newmdenv[
backgroundcolor=gray!15,
topline=false,
bottomline=false,
skipabove=\topsep,
skipbelow=\topsep,
leftmargin=-5pt,
rightmargin=-5pt,
innertopmargin=3pt,
innerbottommargin=3pt
]{siderules}
\title{Color spanning Localized query \footnote{A preliminary 
version of the paper  appeared   in the proceedings of 5th International Conference on Algorithms and Discrete Applied Mathematics, CALDAM 2019.}}
\author{Ankush Acharyya\thanks{Indian Statistical Institute, Kolkata, India.
	Email: {\tt \{ankush\_r, nandysc\}@isical.ac.in}.} \and
		Anil Maheshwari\thanks{Carleton University, Ottawa, Canada.
		Email:~{\tt anil@scs.carleton.ca}.}  \thanks{Research supported by NSERC.}\and Subhas C. Nandy\footnotemark[2]}
\date{\vspace{-5ex}}
\begin{document}
\maketitle

\begin{abstract}
Let $P$ be a set of $n$ points and each of the points is colored with one of the $k$ 
possible colors. We present efficient algorithms to pre-process $P$ such that for 
a given query point $q$, we can quickly identify the smallest color spanning object of the desired type
containing $q$. In this paper, we focus on $(i)$  
intervals, (ii) axis-parallel square, $(iii)$ 
axis-parallel rectangle, $(iv)$ equilateral triangle of fixed orientation and $(v)$ circle, as our desired type of objects.
\end{abstract}
{\bf Keywords:} Color-spanning object, multilevel range searching, localized query.

	
\section{Introduction}
\blue{\em Facility location} is a widely studied problem in modern theoretical computer science. Here we have different types of facilities $\cal F$, and each facility 
has multiple copies, the distance is measured according to a given distance metric $\delta$, and the target is to optimize certain objective function by choosing the 
{\em best} facility depending on the problem. In \blue{\em localized query} problems, the objective is to identify the {\em best} facility containing 
a given query point $q$. In this paper, we consider the following version of the localized facility location problems.
\begin{siderules}
\noindent Given a set ${\cal P}=\{p_1,p_2,\ldots,p_n\}$ of $n$ points, each point is colored with one of the $k$ possible colors. 
The objective is  to pre-process the point set $\cal P$ such that given a query point $q$, we can quickly identify a color spanning object of the desired type that contains $q$ 
having optimum size. 
\end{siderules}
The motivation of this problem stems from the {\em color spanning} variation of the facility location problems 
\cite{abellanas2001smallest,das2009smallest,hasheminejad2015computing,jiang2014shortest,khanteimouri2013computing}. 
Here facilities of type $i \in \{1,2,\ldots,k\}$ is modeled as points of color $i$, and the objective is to identify 
the locality of desired geometric shape containing at least one point of each color such that the desired measure parameter 
(width, perimeter, area etc.) is optimized. Abellanas et al. \cite{abellanas2001smallest} mentioned algorithms for finding 
the smallest color spanning axis parallel rectangle and arbitrary oriented strip in $O(n(n-k)\log^2k)$  and 
$O(n^2\alpha(k)\log k + n^2\log n)$ time, respectively. These results were later improved by Das et al. \cite{das2009smallest} 
to run in $O(n(n-k)\log k)$ and $O(n^2\log n)$ time, respectively.  Abellanas et al. also pointed out that the smallest color 
spanning circle can be computed in $O(nk\log (nk))$ time using the technique of computing the upper envelope of Voronoi surfaces 
as suggested by Huttenlocher et al. \cite{huttenlocher1993upper}. The smallest color spanning  interval can be 
computed  in $O(n)$ time for an ordered point set of size $n$ with $k$ colors on a real line \cite{khanteimouri2013spanning}. 
Color spanning $2$-interval\footnote{given a colored point set on a line, the goal is to find two intervals that cover at least one 
point of each color such that the maximum length of the intervals is minimized}, axis-parallel equilateral triangle, 
axis-parallel square  can be computed in $O(n^2)$ \cite{jiang2014shortest}, $O(n\log n)$ \cite{hasheminejad2015computing}, 
$O(n\log^2 n)$ \cite{khanteimouri2013computing} time respectively.

The problem of constructing an efficient data structure with a given set $P$ of $n$ (uncolored) points 
to find an object of optimum (maximum/minimum) size (depending on the nature of the problem) 
containing a query point $q$ is also studied recently in the literature. These are referred to as the {\it localized query problem}. 

Given a point set $P$ of $n$ points and a query point $q$, Augustine et al. \cite{augustine2010recognizing} considered the problem of 
identifying the largest empty circle and largest empty rectangle that contains $q$ but does not contain any point from $P$. The 
query to identify a (i) largest empty circle takes $O(\log n)$ time, using a data structure
build in $O(n^2)$ space and $O(n^2\log n)$ time. (ii) largest empty rectangle takes $O(\log n)$ time, using a data structure
build in $O(n^2 \log n)$ time and space. Later, Augustine at al. \cite{augustine2013localized} considered a simple
polygon $P$ with $n$ vertices. They proposed a data structure build in $O(n\log n)$ time and $O(n\log ^2n)$ space,
which is an improvement over their previous data structure \cite{augustine2010recognizing} build in $O(n\log ^3n)$
time for the same problem. This improved data structure reports the largest empty circle
containing the query point $q$, without containing any vertices of $P$ in $O(\log n)$ time.
They also proposed two improved algorithms for the largest empty circle problem for point set $P$. They proposed two data structures:
build in $O(n^{\frac{3}{2}}\log^2 n)$ time and $O(n^{\frac{3}{2}}\log n)$ space and in $O(n^{\frac{5}{2}}\log n)$ time and $O(n^{\frac{5}{2}})$
space, respectively. There corresponding query time complexities are $O(\log n \log \log n)$ and $O(\log n)$, respectively. 
Kaplan et. al \cite{kaplan2011finding} improved the complexity results for both the variations of localized query 
problem. The query version of the largest empty rectangle problem can be 
reported in $O(\log ^4 n)$ time using a $O(n\alpha(n)\log^3 n)$ space data structure build in $(O(n\alpha(n)\log^4 n)$ time. For the query
version of the largest empty circle problem, the query time is $O(\log ^2n)$ time. The corresponding data structure can be build in 
$(O(n\log n)$ time and $O(n\log^2 n)$ space respectively. Constrained versions of the largest empty circle problem have also been studied by
Augustine et al. \cite{augustine2010largest}. When the center is constrained to lie on a horizontal query line $\ell$, the largest empty circle
can be reported in $O(\log n)$ time. The preprocessing time and space for the data structure is $O(n\alpha(n)\log n)$. They also considered
the version when the center is constrained to lie on a horizontal query line $\ell$ passing through a fixed point. Using a data structure build
in $O(n\alpha(n)^{O(\alpha(n))}\log n)$ time and space, the largest empty circle can be identified in $O(\log n)$ time. Both the data structures
use the well known $3$D lifting technique. Later Kaminker et al. addressed a special case of the largest empty circle recognition problem, where
for a given set of $O(n)$ circles they can be preprocessed into a data structure of $O(n\log ^3n)$ time and $O(n)$ space which answers 
the query in $O(\log n)$ time. This technique can also be extended to find the largest empty circle for a point set $P$ containing the query point
using the same time and space complexities.  Gester et al. \cite{gester2015largest} studied the problem of finding the largest empty square inscribed in a 
rectilinear polygon $\cal R$ containing a query point. For $\cal R$ without holes, a data structure build in $O(n)$ time and space can answer the
query in $O(\log n)$ time. For the polygon $\cal R$ containing holes, they used a data structure build in $O(n)$ time  and $O(n\log n)$ space. The
query answering can be done in $O(\log n)$ time. Their results can also be extended to find the largest empty square within a point set containing
a query point.

\begin{table}[h!]
\begin{small}
\begin{center}
    \caption{Complexity results for variations of color spanning objects containing $q$.} 
    \label{restable}
    \vspace{0.2in}
\begin{tabular}{|c|c|c|c|} 
\hline 
\rowcolor{blue!10}&\multicolumn{2}{c|}{ {\bf \blue{Pre-processing}}}&\\ 

\rowcolor{blue!10}\multirow{-2}{*}{{\bf \blue{Object}}}& \cellcolor{blue!30}{\em Time }& \cellcolor{blue!30} {\em Space} & \multirow{-2}{*}{{\bf \blue{Query Time}}}\\
\hline 
$\scsi$ & $O(n\log n)$ & $O(n)$ & $O(\log n)$ \\ 
\rowcolor{gray!20}  $\scss$  & $O\Big(N(\frac{\log N}{\log \log N})^2\Big)$, $N=\Theta(nk)$ & $ O(N\log ^2N) $ & $O\Big((\frac{\log N}{\log \log N})^3\Big)$ \\
$\scsr$ & $O(N\log N)$, $N=\Theta\big((n-k)^2\big)$ & $ O\Big(N(\frac{\log N}{\log \log N})\Big)$   & $O\Big((\frac{\log N}{\log \log N})^2\Big)$  \\
\rowcolor{gray!20} $\scst$ & $O(n\log n)$ & $O\Big(n(\frac{\log n}{\log \log n})\Big)$ & $O\Big((\frac{\log n}{\log \log n})^2\Big)$ \\  
$\scsc$   & \multirow{2}{*}{$O(n^2k\log n)$}&\multirow{2}{*}{$O\Big(n^2(\alpha(n))\Big)$} & \multirow{2}{*}{$O(\log n)$} \\
$(1+\epsilon)$-approximation &                    &                    &    \\
\bottomrule
\end{tabular}
\end{center}
\end{small}
\end{table}

\subsection{Our Contribution:}
In this paper, we study the localized query variations of different color spanning objects among a set $P=P_1\cup P_2\cup \ldots \cup P_k$ of points where $P_i$ 
is the set of points of color $i$, $|P_i|=n_i$, $\sum_{i=1}^k n_i = n$. The desired (color-spanning) objects are (i) smallest interval $(\scsi)$,  (ii)
 smallest axis-parallel square $(\scss)$, (iii) smallest axis-parallel rectangle $(\scsr)$, (iv) smallest equilateral triangle of fixed orientation $(\scst)$, and (v) smallest circle $(\scsc)$. For problem (i), the points are 
distributed in $I\!\!R$, and for all other problems, the points are placed in $I\!\!R^2$. For problem (v), we present an $(1+\epsilon)$ approximate solution. To solve this general version of the problem we optimally solve a constrained version of the problem where the center of the circle is constrained to lie on a given line $\ell$ and contains the query point $q$. \remove{To the best of our knowledge this localized query version of the problem has never been studied before and with an application in facility location problem, we believe this version of the problem is interesting.} The results are summarized in Table \ref{restable}. 



\section{Preliminaries}\label{prelim}\vspace{-0.1in}
We use $x(p)$ and $y(p)$ to denote the $x$- and $y$-coordinates of a point $p$. We use  
$dist(.,.)$ to define  the Euclidean distance between (i) two points, (ii)  
two lines, or (iii) a point and a line depending on the context. We use $h(\sigma)$ and $v(\sigma)$
to denote the horizontal and vertical lines through $\sigma \in \mathbb{R}^2$ respectively. 
Without loss of generality, we assume that all the points in $P$ are in the positive quadrant.

\subsection{Range Searching:}
A range tree $\cal R$ is a data structure that supports counting and reporting the points of a given set 
$P \in \mathbb{R}^d$ ($d \geq 2$) that lie in an axis-parallel rectangular query range $R$. $\cal R$ can 
be constructed in $O(n\log^{d-1} n)$ time and space where the counting and reporting for the query range 
$R$ can be done in $O(\log^{d-1} n)$ and $O(\log^{d-1} n +k)$ time, respectively, where $k$ is the number 
of reported points. Later, JaJa et al. \cite{jaja2004space} proposed a dominance query data structure in 
$\mathbb{R}^d$, that supports $d (\geq 3)$ dimensional query range $[\alpha_1, \infty] \times [\alpha_2, 
\infty] \times\ldots\times [\alpha_d, \infty]$ with the following result.

\begin{siderules}
 \begin{lemma}\cite{jaja2004space}\label{jaja}
 Given a set of $n$ points in $\mathbb{R}^d~(d\geq 3)$, it can be preprocessed in a data structure 
 $\cal T$ of size $O\Big(n(\frac{\log n}{\log \log n})^{d-2}\Big)$ in $O(n\log ^{d-2}n)$  time  
 which supports counting points inside a query rectangle in $O\Big((\frac{\log n}{\log \log n})^{d-1}\Big)$ time.
\end{lemma}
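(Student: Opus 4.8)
The plan is to establish the bound by induction on the dimension $d$, combining a high-fan-out range tree with word-RAM techniques so that each dimension contributes a factor of $O(\log n/\log\log n)$ rather than the $O(\log n)$ factor of a textbook range tree. First I would reduce to \emph{rank space}: sort along each coordinate and replace every coordinate by its rank in $\{1,\dots,n\}$, so that all coordinates are integers in $[n]$ and word-level operations become available. A query point's ranks are obtained by one predecessor search per coordinate; using fusion trees each search costs $O(\log n/\log\log n)$, so the $d$ searches are absorbed into the claimed query bound. After this reduction a dominance query is simply a counting query against an orthant in $[n]^d$.

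Next I would settle the base case $d=2$ with a structure of size $O(n)$ and query time $O(\log n/\log\log n)$. For this I would use a multiary wavelet-tree-style structure: a balanced tree of fan-out $B=\Theta(\log^{\varepsilon}n)$ built on the first coordinate, so its height is $O(\log_B n)=O(\log n/\log\log n)$. Answering ``$x_1\le a,\ x_2\le b$'' amounts to walking the root-to-leaf path of $a$ and, at each node, counting how many of the points routed to the children lying entirely below $a$ also satisfy $x_2\le b$. Because a node has only $B=\log^{\varepsilon}n$ children, the bookkeeping at each node can be packed into a constant number of machine words and resolved in $O(1)$ time via precomputed tables (a Four-Russians / word-packing argument). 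Over the $O(\log n/\log\log n)$ levels this yields the claimed query time, and the packed representation keeps the space linear.

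For the inductive step I would build, on the first coordinate, a fan-out-$B$ tree of height $O(\log n/\log\log n)$ and attach to every node a $(d-1)$-dimensional structure (the induction hypothesis) on the projection of its subtree onto the last $d-1$ coordinates. Since each point is stored at one node per level, the space recurrence is $S_d(n)=O(\log n/\log\log n)\cdot S_{d-1}(n)$ with $S_2(n)=O(n)$, giving $S_d(n)=O\big(n(\log n/\log\log n)^{d-2}\big)$, and the nested construction can be carried out within the stated $O(n\log^{d-2}n)$ preprocessing time. A query descends the path of $\alpha_1$ in the first coordinate, and at each of the $O(\log n/\log\log n)$ nodes the children lying entirely in the dominant half-space form a canonical block whose $(d-1)$-dimensional contribution must be extracted.

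The hard part is exactly this last extraction: a naive implementation would query each of the up to $B$ fully-covered children separately, injecting a spurious factor of $B$ and destroying the bound. The crux of the argument — and where I expect the real work to lie — is to arrange the secondary structures so that the union of a contiguous block of children is resolved by a single $(d-1)$-dimensional query, equivalently folding the ``which children'' selection into the same $O(1)$-time $B$-ary word operation used in the base case, so that each of the $O(\log n/\log\log n)$ path nodes costs only $O(Q_{d-1}(n))$. Granting this, the query recurrence $Q_d(n)=O(\log n/\log\log n)\cdot Q_{d-1}(n)$ with $Q_2(n)=O(\log n/\log\log n)$ closes the induction and yields $Q_d(n)=O\big((\log n/\log\log n)^{d-1}\big)$.
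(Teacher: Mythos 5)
The paper gives no proof of this lemma: it is quoted verbatim as a known result of JaJa, Mortensen and Shi \cite{jaja2004space} and used as a black box throughout, so there is no internal argument to compare yours against. Judged against the cited source, your outline reconstructs the right architecture --- reduction to rank space, fusion-tree predecessor searches to locate the query coordinates, a fan-out-$\Theta(\log^{\varepsilon}n)$ tree on one coordinate giving height $O(\log n/\log\log n)$, a word-packed two-dimensional base case, and an induction on $d$ whose space and query recurrences do yield the stated bounds. (One cosmetic point: the lemma as stated speaks of a ``query rectangle,'' but the structure being described is a dominance-counting structure, as the surrounding text makes clear; general boxes follow by inclusion--exclusion over $2^d$ dominance queries at no asymptotic cost.)

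However, your writeup explicitly defers the one step that carries all the difficulty: showing that at each node of the primary tree the contribution of the contiguous block of fully covered children can be extracted with a \emph{single} $(d-1)$-dimensional query rather than one query per child. You say ``granting this'' the recurrences close --- but without it the query recurrence is $Q_d(n)=O\bigl((\log n/\log\log n)\cdot B\cdot Q_{d-1}(n)\bigr)$ and the claimed bound is off by a factor of $\log^{\varepsilon(d-2)}n$. This is precisely the technical heart of \cite{jaja2004space}: the child index is folded into the secondary structure as an extra small-universe coordinate and resolved with the same word-packing used in the base case. As it stands your proposal is a correct and well-organized plan rather than a proof; the preprocessing-time claim $O(n\log^{d-2}n)$ is likewise asserted rather than derived from the construction.
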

\end{siderules}

We have computed all possible color-spanning objects of desired type $\Psi$ (e.g., interval, corridor, 
square, rectangle, triangle) present on the plane, stored those objects in a data structure, and when 
the query point $q$ appears, we use those objects to compute the smallest object containing $q$. The 
optimum color spanning object of type $\Psi$ containing $q$ may be any one of the following two types:
\begin{description}
 \item[$\to$:] $q$  lies inside the optimum color-spanning object. Here, we search the smallest one among 
 the pre-computed (during preprocessing) color spanning objects 
 containing $q$.
 \item[$\tyt$:] $q$ lies on the boundary of the optimum color-spanning object. This is obtained by 
 expanding some pre-computed object to contain $q$. 
 \end{description}
 Comparing the best feasible solutions of $\to$ and $\tyt$, the optimum solution is returned. 
 

\section{Smallest color spanning intervals $(\scsi)$}\vspace{-0.1in}
We first consider the basic one-dimensional version of the problem: 
\begin{siderules}
Given a set of $P$ of $n$ 
points on a real line $L$, where each point has one of the $k$ possible colors; for a query point 
$q$ on the line report the smallest color spanning interval containing $q$, efficiently.   
\end{siderules}

Without loss of generality assume that $L$ is horizontal. We sort the points in $P$ from left to 
right, and for each point, we can find the minimal color-spanning interval in total $O(n)$ time 
\cite{chen2013algorithms}. Next, we compute the following three lists for each point $p_i \in P$. 
\begin{description}
 \item[start:]  the smallest color spanning interval starting at $p_i$, denoted by $\mathbf{start}(p_i)$.
 \item[end:]  the smallest color spanning interval ending at $p_i$, denoted by $\mathbf{end}(p_i)$
 \item[span:]  the smallest color spanning interval spanning $p_i$ (here $p_i$ is not the starting or 
 ending point of the interval), denoted by $\mathbf{span}(p_i)$.
\end{description}
\remove{
 Let the interval between $p_i$ and $p_{i+1}$ be $I_i$. While computing the color spanning
 intervals for each point, we also store the smallest color spanning interval information
 corresponding to the closure of each interval $I_{{cs}_i}$, $i=1, 2, \ldots, n-1$. This
 can be easily obtained using the color spanning interval information ({\em start, end, span}
 lists) corresponding to the points.  Now, we query with the point $q$ to find the intervals those contain $q$. 
}
Observe that, none of the minimal color spanning intervals corresponding to the points $p_i \in P$
can be a proper sub-interval of the color spanning interval of some other point \cite{chen2013algorithms}.

\begin{lemma}\label{pre-scsi}	 
 For a given point set $P$, the aforesaid data structure takes $O(n)$ amount of space and can be computed
 in $O(n\log n)$ time.
\end{lemma}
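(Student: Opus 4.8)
The plan is to prove the two bounds separately: the space bound is essentially a counting argument, while the time bound reduces to showing that, once the points are sorted, each of the three lists $\mathbf{start}$, $\mathbf{end}$, $\mathbf{span}$ is built by a single linear sweep, so that the sorting cost dominates. I would begin with the space, which is immediate: the data structure consists of exactly the three arrays $\mathbf{start}(\cdot)$, $\mathbf{end}(\cdot)$, $\mathbf{span}(\cdot)$, each indexed by the $n$ points of $P$, and each entry stores only the two endpoints (equivalently, two indices into the sorted order) of one interval. Thus each array uses $O(n)$ words of $O(1)$ size; together with the sorted point array and a color-multiplicity array of length $k \le n$ used during the sweeps, the total is $O(n)$.

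For the time bound I would show every step except the initial sort is $O(n)$. Sorting $P$ from left to right takes $O(n\log n)$. Computing, for each point, its minimal color-spanning interval takes $O(n)$ by \cite{chen2013algorithms}. For $\mathbf{start}$ I would fix each $p_i$ as the left endpoint and advance a right pointer until all $k$ colors are present, maintaining per-color counts in the length-$k$ array; since the minimal feasible right endpoint is monotone non-decreasing as the left endpoint moves right (an interval starting further left is a superset), the two pointers together traverse the sequence only once, giving $O(n)$. The $\mathbf{end}$ list is symmetric, computed by the analogous right-to-left sweep in $O(n)$.

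The step I expect to be the crux is the $\mathbf{span}$ list, where $p_i$ must lie strictly in the interior of the reported interval, so the simple monotone two-pointer no longer applies directly. Here I would invoke the observation stated just before the lemma, that no minimal color-spanning interval is a proper sub-interval of another: consequently, if these intervals are sorted by left endpoint they are simultaneously sorted by right endpoint. Hence, as a sweep point moves left to right, the set of precomputed intervals that strictly contain it is always a contiguous block of this sorted list, since intervals enter and leave the active set in the same order. This turns ``report the shortest interval currently spanning the point'' into a sliding-window-minimum query, which a monotone deque answers in $O(n)$ total time over all $n$ points; an entry is pushed and popped at most once.

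Combining the pieces, every phase after sorting runs in $O(n)$ time and uses $O(n)$ space, so the preprocessing is $O(n\log n)$ time and $O(n)$ space, which proves the lemma. The one place where care is genuinely needed is the $\mathbf{span}$ argument, specifically verifying that the non-nesting property forces the active set to be an interval of the sorted list (so that a single deque suffices); the $\mathbf{start}$/$\mathbf{end}$ sweeps and the space count are routine.
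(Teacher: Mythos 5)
Your proposal is correct and follows the same overall strategy as the paper (sort once, then build every list by a linear scan), but it is considerably more explicit: the paper's own proof is a one-line assertion that the minimal color-spanning-interval algorithm of Chen et al.\ ``can be extended'' to produce the $\mathbf{start}$, $\mathbf{end}$ and $\mathbf{span}$ lists, each of size $O(n)$, and leaves all mechanics to that citation. Your two-pointer sweeps for $\mathbf{start}$ and $\mathbf{end}$ (justified by the monotonicity of the minimal feasible right endpoint) and the space count are exactly what that citation conceals. The genuinely new content you supply is the $\mathbf{span}$ step: from the non-nesting observation you deduce that sorting the minimal intervals by left endpoint also sorts them by right endpoint, hence the intervals strictly containing the sweep point always form a contiguous window whose two ends move monotonically, so a monotone deque yields all window minima in $O(n)$ total; this argument is sound. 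One caveat worth recording: your sweep computes the minimum over the \emph{precomputed minimal} color-spanning intervals strictly containing $p_i$, which is the reading consistent with the paper's $\mathsf{Type~\text{-}~I}$ query (``search among the pre-computed objects''). If $\mathbf{span}(p_i)$ were instead the smallest among \emph{all} color-spanning intervals strictly containing $p_i$, that interval need not be minimal (for colors $1,1,2$ at positions $0,1,2$, the smallest interval strictly containing the middle point is $[0,2]$, which is not a minimal color-spanning interval), and the deque over minimal intervals would miss it. Under the paper's intended definition your bounds of $O(n)$ space and $O(n\log n)$ time stand.
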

\remove{
\begin{proof}
 The color-spanning interval algorithm described in \cite{chen2013algorithms} can be extended to build the
 lists {\em start, end, span} corresponding to each point and the set of color spanning interval as well as
 for each interval. All the lists are of size $O(n)$. Thus the total data structure takes linear amount of
 space and can be build in O(n) time.
\end{proof}
}

During the query (with a point $q$), we identify $p_i,p_{i+1} \in P$ such that $x(p_i) \leq x(q)\leq x(p_{i+1})$. Next, 
we compute the lengths $OPT_C$ and $OPT_E$ of the  $\to$  $\scsi$
and $\tyt$ $\scsi$, respectively, as follows. Finally, we compute $OPT=\min\{OPT_C,OPT_E\}$.
\begin{description}
 \item[$\to~\scsi$:] $OPT_C$ is  $\min\{\mathbf{span}(p_i),\mathbf{span}(p_{i+1})\}$. 
 \item[$\tyt~\scsi$:] $OPT_E$ is $\min\{\mathbf{end}(p_i)+dist(p_i,q), \mathbf{start}(p_{i+1})+dist(p_{i+1},q)\}$. 
\end{description}

\begin{theorem}
 Given a colored  point set $P$ on a real line $L$, the smallest color spanning interval containing a query point
 $q$ can be computed in $O(\log n)$ time, using a linear size data structure computed in $O(n\log n)$ time.
\end{theorem}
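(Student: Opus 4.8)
The plan is to combine the preprocessing of Lemma~\ref{pre-scsi} with a single binary search at query time, so that the only genuine work is to prove that the formulas for $OPT_C$ and $OPT_E$ are correct and that together they exhaust all cases. First I would locate $q$ by binary search among the sorted abscissae of $P$, obtaining in $O(\log n)$ time the consecutive pair $p_i,p_{i+1}$ with $x(p_i)\le x(q)\le x(p_{i+1})$; the remainder of the query is a constant number of look-ups into the precomputed lists $\mathbf{start},\mathbf{end},\mathbf{span}$. Thus the $O(\log n)$ query bound and the $O(n\log n)$-time, linear-space preprocessing bounds follow directly from Lemma~\ref{pre-scsi}, and all that remains is correctness.

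For correctness I would first establish the dichotomy: any color spanning interval containing $q$ either has $q$ in its interior (type $\to$) or has $q$ as one of its endpoints (type $\tyt$), and the optimum is the smaller of the best interval of each type. If the optimal interval $I^{\ast}$ has $q$ strictly inside, then $I^{\ast}$ must be a \emph{minimal} color spanning interval: were either endpoint slidable inward without destroying the color-spanning property, one could shrink $I^{\ast}$ while still covering $q$, contradicting optimality. Hence the $\to$ optimum is the shortest minimal color spanning interval that strictly contains $q$.

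The crux of the $\to$ analysis is to reduce this to inspecting only $p_i$ and $p_{i+1}$, and here I would exploit that the open gap $(x(p_i),x(p_{i+1}))$ contains no point of $P$. Since a minimal interval has both endpoints at points of $P$, any such interval strictly containing $q$ has left endpoint $\le x(p_i)$ and right endpoint $\ge x(p_{i+1})$, and therefore, barring the degenerate interval $[p_i,p_{i+1}]$, strictly contains $p_i$ or $p_{i+1}$; its length is then at least $\mathbf{span}(p_i)$ or at least $\mathbf{span}(p_{i+1})$ respectively. Conversely, the same no-point-in-the-gap fact forces the right endpoint of $\mathbf{span}(p_i)$ up to at least $x(p_{i+1})$ and the left endpoint of $\mathbf{span}(p_{i+1})$ down to at most $x(p_i)$, so both precomputed intervals already contain $q$ and are feasible. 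Combining the two bounds yields $OPT_C=\min\{\mathbf{span}(p_i),\mathbf{span}(p_{i+1})\}$ exactly. For the $\tyt$ case the same structural fact does the work: if $q$ is the left endpoint of $I^{\ast}$, then because the points lying to the right of $q$ are precisely those lying to the right of $p_{i+1}$, the optimal right endpoint is inherited from $\mathbf{start}(p_{i+1})$, giving length $\mathbf{start}(p_{i+1})+dist(p_{i+1},q)$; symmetrically one gets $\mathbf{end}(p_i)+dist(p_i,q)$ when $q$ is the right endpoint. Hence $OPT_E$ is as claimed and $OPT=\min\{OPT_C,OPT_E\}$ is the global optimum.

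I expect the main obstacle to be the type $\to$ reduction, specifically making rigorous the claim that consulting $p_i$ and $p_{i+1}$ suffices. This rests on the no-nesting property recorded just after Lemma~\ref{pre-scsi} (no minimal color spanning interval is a proper subinterval of another) together with the emptiness of the gap. I would additionally need to dispose of the boundary situations where $q$ coincides with a data point, or where $[p_i,p_{i+1}]$ is itself color spanning, verifying that these are absorbed either by the $\tyt$ formula or by taking the overall minimum $OPT$.
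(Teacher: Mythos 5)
Your proposal follows the paper's route exactly: the paper states the algorithm (binary search for the consecutive pair $p_i,p_{i+1}$, then $OPT=\min\{OPT_C,OPT_E\}$ with the same two formulas) and gives no correctness argument at all, so what you add — the $\to$/$\tyt$ dichotomy and the empty-gap reduction showing that only $\mathbf{span}(p_i)$, $\mathbf{span}(p_{i+1})$, $\mathbf{end}(p_i)$ and $\mathbf{start}(p_{i+1})$ need be consulted — is precisely the justification the paper leaves implicit, and it is sound for the generic cases. The one loose end you defer does not, however, resolve the way you hope: if $[p_i,p_{i+1}]$ is itself color spanning (possible only when $k\le 2$, since the closed gap contains no other point of $P$), it is excluded from both $\mathbf{span}$ values by definition ($p_i$ and $p_{i+1}$ are its endpoints) and it is \emph{not} absorbed by the $\tyt$ formula or by taking the overall minimum. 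For instance, with points $0$ (red), $1$ (blue), $100$ (red) and $q=0.5$, one gets $OPT_C=100$ and $OPT_E=99.5$, while the true optimum is the interval $[0,1]$ of length $1$. So this degenerate case requires an explicit constant-time check of whether $p_i$ and $p_{i+1}$ together cover all $k$ colors; with that addendum (which the paper also omits) your argument is complete.
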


\remove{
\begin{corollary}
 For any query point $q$, the color spanning vertical corridor (\em corridor of any fixed orientation) containing
 $q$ can be identified in $O(\log n)$ time using a data structure of $O(n)$ time and space complexity after taking
 the projection of each point on $x$-axis in $O(n)$ time.
\end{corollary}}

\section{Smallest color spanning axis-parallel square $(\scss)$} \label{square}\vspace{-0.1in}
In this section, we consider the problem of finding the {\em minimum width color spanning axis parallel square}
($\scss$) among a set $P$ of $n$ points  colored by one of the $k$ possible colors. 
 We first compute the set $\cal S$ of all possible minimal color spanning squares for the given point set $P$ in
 $O(n\log ^2n)$ time, and store them in 
$\Theta(nk)$ space \cite{khanteimouri2013computing}. For a square $s_i \in {\cal S}$, we maintain a five tuple
$(s^\ell_i,s^r_i,s^t_i,s^b_i,\lambda_i)$, where $s^\ell_i$ 
and $s^r_i$ are the $x$-coordinates of the {\em left} and {\em right} boundaries of $s_i$, $s^t_i$ and $s^b_i$
are the $y$-coordinates of its {\em top} and {\em bottom} boundaries 
respectively, and $\lambda_i$ is its side length. 
\subsection{Computation of $\to$ $\scss$} \vspace{-0.1in}
This needs a data structure ${\cal T}_{tb{\ell}r}$ (see Lemma \ref{jaja}) with the point set
$\{(s_i^t,s_i^b,s_i^\ell,s_i^r)$, $i = 1,2,\ldots, n\}$ in $I\!\!R^4$
as the preprocessing. 
Each internal node of the last level of ${\cal T}_{tb{\ell}r}$ contains the minimum length ($\lambda$-value)
of the squares stored in the sub-tree rooted at that node. Given the query point $q$, we use this data structure
to identify the square of minimum side-length in the set ${\cal S}_{contained}$ in
$O\Big((\frac{\log n}{\log \log n})^3\Big)$ time (see Lemma \ref{jaja}), where ${\cal S}_{contained}$ is the subset of 
$\cal S$ that contains $q$.  
\subsection{Computation of $\tyt$ $\scss$}\label{nstab} \vspace{-0.1in}
For the simplicity in the analysis, we further split the squares in ${\cal S}\setminus {\cal S}_{contained}$ 
into two different subcases, namely ${\cal S}_{stabbed}$ and ${\cal S}_{not\_stabbed}$, where ${\cal S}_{stabbed}$ (resp. ${\cal S}_{not\_stabbed}$) 
denote the subset of $\cal S$ that is stabbed (resp. not stabbed) by the vertical line $v(q)$ or the horizontal line $h(q)$ (see Figure \ref{figsquare}).
\begin{figure}
\centering     
\subfigure[{\it Stabbed squares} and {\it Containing squares} with respect to $q$]{\includegraphics[width=45mm]{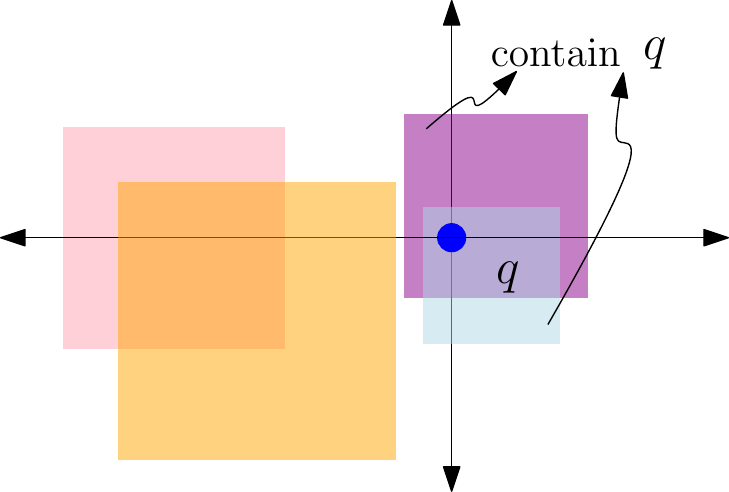}} \hspace{0.2in}
\subfigure[{\it Not Stabbed squares} with respect to $q$]{\includegraphics[width=43mm]{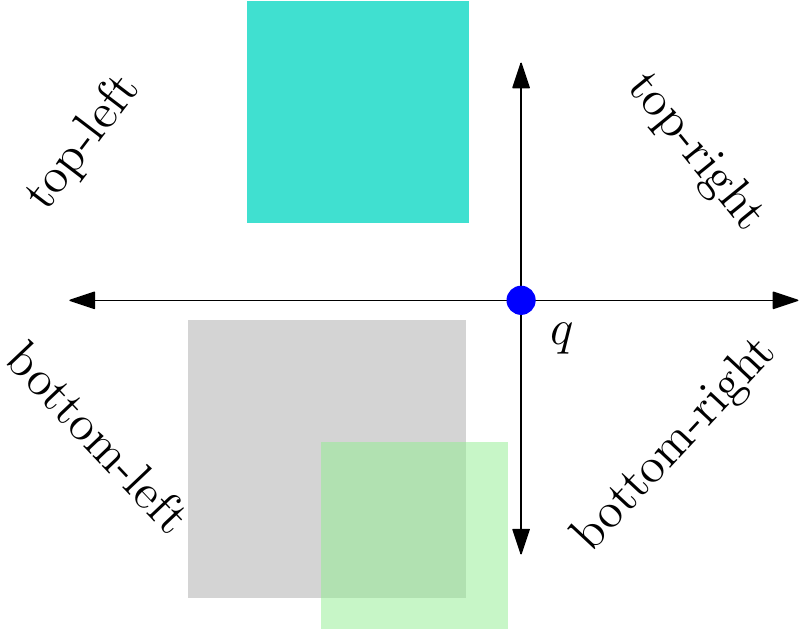}}
\caption{Querying for square}
\label{figsquare}
\end{figure}

We first consider a subset ${\cal S}_{stabbed}^{tbr}$ of the squares in ${\cal S}_{stabbed}$,
whose members are intersected by $h(q)$, and to the left of the point $q$. During the preprocessing,
we have created a data structure ${\cal T}_{tbr}$ with the points $\{(s_i^t,s_i^b,s_i^r), i=1,2,
\ldots, n\}$ in $I\!\!R^3$ (see Lemma \ref{jaja}). Every internal node of the last level of ${\cal T}_{tbr}$ 
contains the maximum value of $s_i^\ell$ coordinate among the points in the sub-tree rooted at that node.
During the query with the point $q$, the search is executed in ${\cal T}_{tbr}$ to find the disjoint
subsets of ${\cal S}_{stabbed}^{tbr}$ satisfying the above query rooted at the third level of the tree.
For each of these subsets, we consider the maximum $s^\ell_i$ value ($x$-coordinate of the left boundary), 
to find a member $s \in {\cal S}_{stabbed}^{tbr}$ whose left boundary is closest to $q$. The square $s$
produces the smallest one among all the squares obtained by extending the members of ${\cal S}_{stabbed}^{tbr}$
so that their right boundary passes through $q$. A similar procedure is followed 
to get the smallest square among all the squares obtained by extending the members of
${\cal S}_{stabbed}^{tb\ell}$ (resp. ${\cal S}_{stabbed}^{{\ell}rb}$, ${\cal S}_{stabbed}^{{\ell}rt}$)
so that their left (resp. bottom, top) boundary passes through $q$, where the 
sets ${\cal S}_{stabbed}^{tb\ell}$, ${\cal S}_{stabbed}^{{\ell}rb}$ and ${\cal S}_{stabbed}^{{\ell}rt}$
are defined as  the set ${\cal S}_{stabbed}^{tbr}$.   

Next, we consider the members in ${\cal S}_{not\_stabbed}$ that lie in the bottom-left quadrant with
respect to the query point $q$. These are obtained by searching the data structure ${\cal T}_{tr}$ with
points $\{(s_i^t,s_i^r), i=1,2,\ldots, n\}$ in $I\!\!R^2$, created in the preprocessing phase. Every 
internal node of ${\cal T}_{tr}$ at the second level contains the $L_\infty$ Voronoi diagram ($VD_\infty$)
of the bottom-left corners of all the squares in the sub-tree rooted at that node.
During the query, it finds the disjoint subsets of ${\cal S}_{not\_stabbed}^{tr}$ satisfying the above query
rooted at the second level of the tree. Now, in each subset (corresponding to an internal node in the second
level of the tree), we locate the closest point of $q$ in $VD_\infty$. The corresponding square produces
the smallest one among all the squares in these sets that are  obtained by extending the members of
${\cal S}_{not\_stabbed}^{tr}$ so that their top/right boundary passes through $q$. The similar procedure
is followed to search for an appropriate element for each of the subsets  ${\cal S}_{not\_stabbed}^{t\ell}$,
${\cal S}_{not\_stabbed}^{br}$ and ${\cal S}_{not\_stabbed}^{b\ell}$, defined as in ${\cal S}_{not\_stabbed}^{tr}$.   

Finally, the smallest one obtained by processing the aforesaid nine data structures is reported. Using
Lemma \ref{jaja}, we have the following theorem:

\remove{
\begin{figure}[t]
\centering
 \includegraphics[scale=0.6]{linftyvd.pdf}\vspace{-0.1in}
 \caption{Different types of bisectors for two points $p,q$ with respect to $L_{\infty}$ distance metric.}\vspace{-0.2in}
 \label{bisectorL}
\end{figure}
}
\begin{theorem}\label{thx}
 Given a colored point set $P$, where each point is colored with one of the $k$ possible colors, for the query point $q$, the smallest color spanning square 
 can be found in $O\Big((\frac{\log N}{\log \log N})^3\Big)$ time, using a data structure built in $O\Big(N(\frac{\log N}{\log \log N})^2\Big)$ time and 
 $O(N\log^2 N)$ space, where $N=\Theta(nk)$.
\end{theorem}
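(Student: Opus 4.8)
The plan is to show that the value returned equals the side of the smallest color spanning square containing $q$, and then to read off the three bounds from Lemma~\ref{jaja} together with the sizes of the auxiliary structures. The whole correctness argument rests on one reduction. Since $\cal S$ is the set of \emph{all} minimal color spanning squares and every color spanning square contains a minimal one, the optimum side $OPT$ satisfies $OPT=\min_{s\in{\cal S}}\sigma(s,q)$, where $\sigma(s,q)$ denotes the side of the smallest axis-parallel square containing $s\cup\{q\}$: the optimal square $s^{*}$ contains some $s\in{\cal S}$ and contains $q$, whence $\sigma(s,q)\le\lambda(s^{*})$, while every square realizing $\sigma(s,q)$ is color spanning and contains $q$, whence $\sigma(s,q)\ge\lambda(s^{*})$. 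It therefore suffices to evaluate $\min_{s\in{\cal S}}\sigma(s,q)$, and I would do so by splitting $\cal S$ according to the position of $q$ relative to $s$ into the nine classes used above.

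The $\to$ contribution is the case $q\in s$, where $\sigma(s,q)=\lambda(s)$; the minimum over such $s$ is the smallest $\lambda$ among squares of $\cal S$ containing $q$, i.e.\ the four-dimensional dominance query $s_i^{t}\ge y(q)$, $s_i^{b}\le y(q)$, $s_i^{\ell}\le x(q)$, $s_i^{r}\ge x(q)$ answered by ${\cal T}_{tb\ell r}$ with the minimum $\lambda$ stored at the last-level canonical nodes (after the usual sign flips turning $\le$ into $\ge$). For a stabbed class, say $q$ to the right of $s$ with $h(q)$ crossing $s$ (the class ${\cal S}_{stabbed}^{tbr}$), the smallest enclosing square keeps the left side of $s$ and pushes its right side to $x(q)$, so $\sigma(s,q)=x(q)-s_i^{\ell}$; as this decreases monotonically in $s_i^{\ell}$, the minimum is attained by the largest $s_i^{\ell}$ among squares with $s_i^{b}\le y(q)\le s_i^{t}$ and $s_i^{r}\le x(q)$ --- a three-dimensional dominance query on ${\cal T}_{tbr}$ with $\max s_i^{\ell}$ stored at canonical nodes. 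The other three stabbed orientations $({\cal S}_{stabbed}^{tb\ell},{\cal S}_{stabbed}^{\ell rb},{\cal S}_{stabbed}^{\ell rt})$ are handled symmetrically.

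The crux, and the step I expect to be the main obstacle, is the $\tyt$ not-stabbed (corner) case. If $s$ lies entirely in the bottom-left quadrant of $q$ (so $s_i^{r}\le x(q)$ and $s_i^{t}\le y(q)$), the smallest square containing $s\cup\{q\}$ must pin the bottom-left corner $(s_i^{\ell},s_i^{b})$ and has side $\max\{x(q)-s_i^{\ell},\,y(q)-s_i^{b}\}=\|q-(s_i^{\ell},s_i^{b})\|_{\infty}$. I would first check that this maximum dominates $\lambda(s)$, so the enlarged square stays color spanning, and that it is exactly the $L_{\infty}$ distance from $q$ to the bottom-left corner of $s$; minimizing it over the admissible $s$ is then a range-restricted $L_{\infty}$ nearest-neighbour query. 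This is what ${\cal T}_{tr}$ realizes: the quadrant constraint on $(s_i^{t},s_i^{r})$ selects $O(\log N)$ canonical node-sets, and the $VD_{\infty}$ stored at each locates the nearest corner by point location, the global minimum distance giving the best corner enlargement. The delicate points are to confirm that combining the per-node nearest neighbours by a plain minimum is correct, and that the containment case, the four stabbed orientations, and the four corner quadrants $({\cal S}_{not\_stabbed}^{tr},{\cal S}_{not\_stabbed}^{t\ell},{\cal S}_{not\_stabbed}^{br},{\cal S}_{not\_stabbed}^{b\ell})$ are jointly exhaustive, so that $\min_{s}\sigma(s,q)$ is recovered as the minimum of the nine sub-answers.

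It then remains to tally the costs. Building and storing $\cal S$ takes $O(n\log^{2}n)$ time and $\Theta(nk)=N$ space. Invoking Lemma~\ref{jaja} with $d=4$ for the $\to$ structure and with $d=3$ for the four stabbed structures, and using the standard analysis of a range tree whose second-level nodes carry $L_{\infty}$ Voronoi diagrams for the four not-stabbed structures, the preprocessing time, space, and query cost of the combined data structure are obtained by summing (for space and time) and maximizing (for the query). The query is dominated by the $d=4$ term $O\big((\tfrac{\log N}{\log\log N})^{3}\big)$ of Lemma~\ref{jaja}, which exceeds the $O\big((\tfrac{\log N}{\log\log N})^{2}\big)$ and $O(\log^{2}N)$ contributions of the remaining structures, and the space is governed by the Voronoi-augmented trees; this yields precisely the bounds asserted in Theorem~\ref{thx}.
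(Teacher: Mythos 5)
Your proposal follows essentially the same route as the paper: the same nine-way decomposition into the containment case (4-dimensional dominance via ${\cal T}_{tb\ell r}$ with minimum $\lambda$ at canonical nodes), the four stabbed orientations (3-dimensional dominance with extremal $s_i^\ell$-type values), and the four quadrant cases (2-dimensional range trees carrying $L_\infty$ Voronoi diagrams of the relevant corners), with the complexities read off from Lemma~\ref{jaja}. The paper's own proof is only a tally of these costs, so your explicit correctness reduction $OPT=\min_{s\in{\cal S}}\sigma(s,q)$ and the case-by-case formulas for $\sigma(s,q)$ supply detail the paper leaves implicit rather than a genuinely different argument.
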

\begin{proof}
 The preprocessing time and space complexity results are dominated by the problem of  finding $\to$ $\scss$ (the smallest square containing $q$). 
The query time for finding (i) $\to$ $\scss$ is  
  $O\Big((\frac{\log N}{\log \log N})^3\Big)$. (ii) $\tyt$ $\scss$ in the set ${\cal S}_{stabbed}$ is $O\Big((\frac{\log N}{\log \log N})^2\Big)$, and 
  (iii) $\tyt$ $\scss$ in the set ${\cal S}_{not\_stabbed}$ is $O(\log^2 N)$ since the search in the 2D range tree\footnote{here the orthogonal range searching result due to \cite{jaja2004space} 
  is not applicable since it works for point set in $I\!\!R^d$, where $d \geq 3$.} is $O(\log N)$, and the search in the $L_\infty$ Voronoi diagram 
  in each of the $O(\log N)$ internal nodes is $O(\log N)$. 

\end{proof}

\section{Smallest color spanning axis-parallel rectangle $(\scsr)$}\label{rectangle}\vspace{-0.1in}
Let us first mention that, we consider the perimeter of a rectangle as its size. Among a set of $n$
points with $k$ different colors, the size of the set $\cal R$ of all possible minimal color spanning
axis parallel rectangles  is $\Theta((n-k)^2)$, and these can be computed in $O(n(n-k)\log k)$ time
\cite{abellanas2001smallest,das2009smallest}. As in Section \ref{square}, (i) we will use
$(s^\ell_i, s^r_i, s^t_i,s^b_i)$ to denote the coordinate of left, right, top and bottom side of
rectangle $R_i \in \cal R$, (ii) define the set $\to$ and $\tyt$ $\scsr$ (see Figure \ref{rectpos}),
 and (iii) split the set $\tyt$ $\scsr$ in $\cal R$ into two subsets ${\cal R}_{stabbed}$ and
 ${\cal R}_{not\_stabbed}$ respectively. 

 \remove{\begin{wrapfigure}{r}{0.45\textwidth}
\vspace{-0.5in}
  \begin{center}
  \includegraphics[scale=0.45]{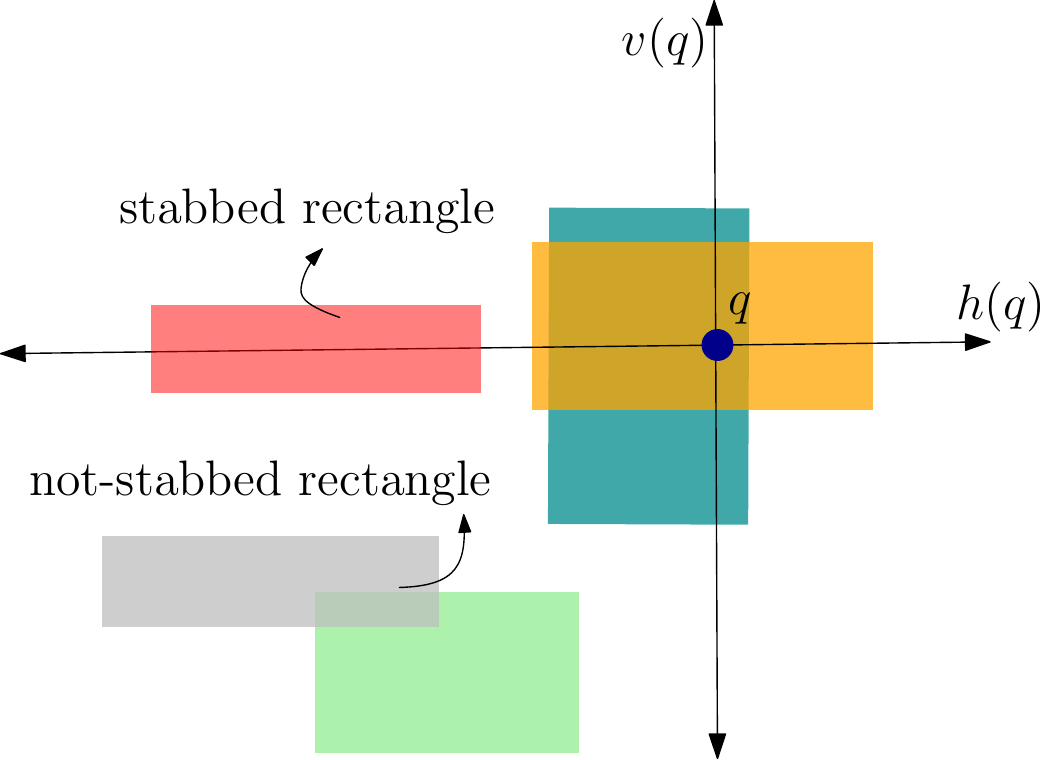}
\end{center}
  \caption{Different possible positions of rectangles with respect to $q$} 
  \vspace{-0.2in}
   \label{rectpos}
   \end{wrapfigure}
}

\begin{figure}[t]
 \centering
 \includegraphics[scale=0.6]{rectposition.pdf}
 \caption{Different possible positions of Rectangles with respect to $q$}
 \label{rectpos}
\end{figure}

The data structures used for finding $\to$ $\scsr$ is exactly the same as that of Section \ref{square};
the complexity results are also the same.  We now discuss the algorithm for computing $\tyt$ $\scsr$.
Let us consider a member $R_i \in {\cal R}_{stabbed}^{tbr}$ ($\subseteq {\cal R}_{stabbed}$), 
where ${\cal R}_{stabbed}^{tbr}$ is the set of rectangles whose top boundaries are above the line
$h(q)$, bottom boundaries are below $h(q)$, and the right boundaries are to the left of $v(q)$. Its
bottom-left coordinate is $(s^\ell_i, s^b_i)$. The size of the rectangle, by extending
$R_i$ so that it contains $q$ on its right boundary, is $(x(q)-s^\ell_i)+h_i = (h_i-s_i^\ell) + x(q)$,
where $h_i=(s^t_i-s^b_i)$ is the height of $R_i$. Thus, to identify the minimum size rectangle by extending 
the members of the set ${\cal R}_{stabbed}^{tbr}$, we need to choose a member of ${\cal R}_{stabbed}$ with the
minimum $(h_i-s_i^\ell)$. Thus, the search structure ${\cal T}_{stabbed}^{tbr}$ for finding the minimum-sized
member in ${\cal R}_{stabbed}^{tbr}$ remains same as that of Section \ref{square}; 
the only difference is that here in each internal node of the third level of ${\cal T}_{stabbed}^{tbr}$,
we need to maintain $\min\{(h_i-s_i^\ell)|R_i ~\text{in that sub-tree}\}$. Similar modifications are done
in the data structure for computing smallest member in each of the sets ${\cal R}_{stabbed}^{tb\ell}$,
${\cal R}_{stabbed}^{{\ell}rt}$ and ${\cal R}_{stabbed}^{{\ell}rb}$.

Next, consider a rectangle $R_i \in {\cal R}_{not\_stabbed}^{tr}$ ($\subseteq {\cal R}_{not\_stabbed}$).
The coordinate of its bottom-left corner is $(s_i^\ell, s_i^b)$. In order to have $q$ on its boundary,
we have to extend the rectangle of $R_i$ to the right up to $x(q)$ and to the top up to 
$y(q)$. Thus, its size becomes $((x(q)-s^\ell_i)+(y(q)-s^b_i)) = ((x(q)+y(q)-(s_i^\ell+s_i^b))$. Hence,
instead of maintaining the $L_\infty$ Voronoi diagram at the internal nodes of the second level of the
data structure ${\cal T}_{not\_stabbed}^{tr}$ (see Section \ref{square}),
we maintain $\max\{(s_i^\ell+s_i^b)|R_i ~\text{in the sub-tree rooted at that node}\}$. This leads
to the following theorem:
\begin{theorem}
Given a set of $n$ points with $k$ colors, the smallest perimeter color spanning rectangle containing
the query point $q$ can be reported in $O\Big((\frac{\log N}{\log \log N})^2\Big)$ time using a data
structure built in $O(N\log N)$  time and $O\Big(N(\frac{\log N}{\log \log N})\Big)$ space,
where $N=\Theta((n-k)^2)$.
 \end{theorem}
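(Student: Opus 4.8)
The plan is to carry over the Type-I/Type-II framework of the square case and to exploit one structural feature that the square problem lacked: because the size measure is now the \emph{perimeter}, every quantity that has to be optimized becomes an affine function of the four sides $(s_i^\ell,s_i^r,s_i^t,s_i^b)$. This affine structure is the whole point. Wherever the square algorithm was forced to store an $L_\infty$ Voronoi diagram (because the growth of a square is governed by a $\max$), here it suffices to store a single $\min$ or $\max$ of a linear combination of the coordinates at the internal nodes of the structure of Lemma~\ref{jaja}. First I would settle Type-I ($q$ inside the rectangle): a rectangle $R_i$ contains $q$ exactly when $s_i^\ell\le x(q)\le s_i^r$ and $s_i^b\le y(q)\le s_i^t$, a dominance condition, and storing at each node the minimum of $2(s_i^r-s_i^\ell)+2(s_i^t-s_i^b)$ over its subtree returns the best containing rectangle, exactly as in Section~\ref{square}.

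For Type-II I would follow the split into $\mathcal{R}_{stabbed}$ and $\mathcal{R}_{not\_stabbed}$ and verify the affine claim case by case. For a rectangle in $\mathcal{R}_{stabbed}^{tbr}$ (top above $h(q)$, bottom below $h(q)$, right strictly left of $v(q)$) the only way to place $q$ on the boundary is to push the right side out to $x(q)$, giving perimeter $2x(q)+2\big(h_i-s_i^\ell\big)$ with $h_i=s_i^t-s_i^b$; since the $q$-dependent part is constant over the family, the query reduces to minimizing $h_i-s_i^\ell$ over a three-sided dominance range, so I store $\min\{h_i-s_i^\ell\}$ at the third-level nodes and invoke Lemma~\ref{jaja} with $d=3$. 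The three rotated families $tb\ell,\ \ell r t,\ \ell r b$ are identical after sign changes. For $\mathcal{R}_{not\_stabbed}^{tr}$ (the whole rectangle in the bottom-left quadrant of $q$) I must grow both the top and the right side, so the perimeter becomes $2\big(x(q)+y(q)\big)-2\big(s_i^\ell+s_i^b\big)$; again the $q$-part is fixed, so a plain two-dimensional range tree storing $\max\{s_i^\ell+s_i^b\}$ at its second-level nodes answers the query in $O(\log N)$ time, strictly cheaper than the Voronoi search the square case needed. The three remaining quadrant families are symmetric.

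The answer is the minimum over the Type-I rectangle and all Type-II sub-cases, and the overall bounds come from reading off Lemma~\ref{jaja} at the largest dimension that occurs, together with $N=\Theta\big((n-k)^2\big)$. The step I expect to be the real obstacle is the Type-I containment search: as stated it is a genuinely four-dimensional dominance query, whose Lemma~\ref{jaja} cost is $O\big((\tfrac{\log N}{\log\log N})^3\big)$ in time and $O\big(N(\tfrac{\log N}{\log\log N})^2\big)$ in space, i.e.\ one level higher than the bound claimed in the theorem. To meet the stated $O\big((\tfrac{\log N}{\log\log N})^2\big)$ query and $O\big(N(\tfrac{\log N}{\log\log N})\big)$ space one must argue that containment can be answered in effectively three dimensions --- presumably by exploiting the fact that minimal color-spanning rectangles form an antichain under inclusion and are generated by a monotone staircase in their defining coordinates, which should let one of the four dominance constraints be eliminated. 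Pinning down that reduction, and checking that the affine aggregates propagate correctly through the fractional-cascading layers of Lemma~\ref{jaja}, is where the bulk of the work lies.
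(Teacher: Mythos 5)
Your Type-II analysis coincides with the paper's own argument in every detail: for ${\cal R}_{stabbed}^{tbr}$ the paper derives the same perimeter expression $(h_i-s_i^\ell)+x(q)$ and stores $\min\{h_i-s_i^\ell\}$ at the third-level nodes of the Lemma~\ref{jaja} structure, and for ${\cal R}_{not\_stabbed}^{tr}$ it derives $(x(q)+y(q))-(s_i^\ell+s_i^b)$ and replaces the $L_\infty$ Voronoi diagrams of the square case by $\max\{s_i^\ell+s_i^b\}$ at the second-level nodes, exactly as you propose. The symmetric sub-cases and the final minimum over all families are also handled identically. So on Type-II you have reproduced the paper's proof.

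The obstacle you flag for Type-I is genuine, and the paper does not resolve it either: it says only that the Type-I data structure ``is exactly the same as that of Section~\ref{square}; the complexity results are also the same,'' i.e.\ a four-dimensional dominance query on $(s_i^t,s_i^b,s_i^\ell,s_i^r)$, which by Lemma~\ref{jaja} with $d=4$ costs $O\big((\frac{\log N}{\log\log N})^3\big)$ query time, $O(N\log^2 N)$ build time and $O\big(N(\frac{\log N}{\log\log N})^2\big)$ space --- one level above every bound asserted in the theorem. No dimension-reduction argument (antichain, staircase, or otherwise) appears anywhere in the paper, so your proposal is not missing anything relative to the source; rather, you have correctly identified an inconsistency between the theorem statement and the construction that is supposed to prove it. Your suggested repair via the antichain property of minimal color-spanning rectangles is plausible but is not carried out, and it is not obvious that it eliminates one of the four dominance constraints; as written, neither your argument nor the paper's establishes the claimed $O\big((\frac{\log N}{\log\log N})^2\big)$ query, $O(N\log N)$ preprocessing and $O\big(N(\frac{\log N}{\log\log N})\big)$ space once the Type-I containment search is accounted for.
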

\section{Smallest color spanning equilateral triangle $(\scst)$}\vspace{-0.1in}
Now we consider the problem of finding the {\em minimum width color spanning equilateral triangle of fixed orientation} ($\scst$). Without loss 
of generality, we consider that the base of the triangles are parallel to the $x$-axis. For any colored point set $P$ of $n$ points, where each point 
is colored with one of the $k$ possible colors, the size of the set $\triangle$ of all possible minimal color spanning axis parallel triangles is 
$O(n)$ and these triangles can be computed in $O(n \log n)$ time \cite{hasheminejad2015computing}. 

 \remove{\begin{wrapfigure}{r}{0.4\textwidth}
\vspace{-0.1in}
  \begin{center}
  \includegraphics[scale=0.65]{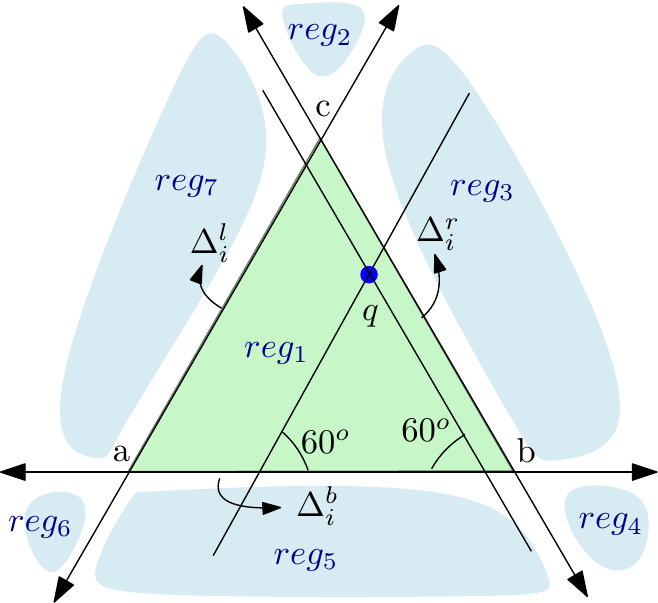}
\end{center}
\vspace{-0.2in}
  \caption{Different regions with respect to a triangle } 
  \vspace{-0.3in}
   \label{triregion}
   \end{wrapfigure}}

   \begin{figure}
     \begin{center}
  \includegraphics[scale=0.8]{triarea.pdf}
\end{center}
  \caption{Different regions with respect to a triangle } 
   \label{triregion}
   \end{figure}

As in earlier sections, we will use $s_i^{\ell}$, $s_i^{r}$ and $s_i^{t}$  to denote the {\em bottom-left},
{\em bottom-right}  and {\em top} vertex of a triangle $\triangle_i \in {\cal \triangle}$. We also define 
three lines $\triangle^{l}_i$, $\triangle^{r}_i$ and $\triangle^{b}_i$  as the line containing left, right
and base arms respectively (see Figure 
\ref{triregion}).
 
 \remove{ 
\begin{figure}
 \centering
 \includegraphics[scale=0.8]{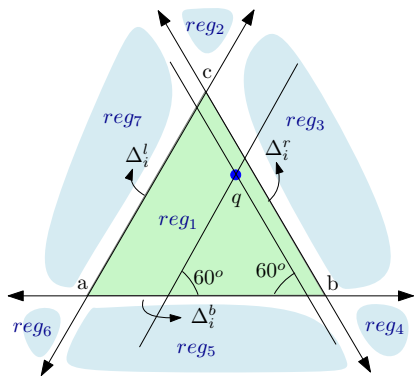}
 \caption{Different regions with respect to a triangle}
 \label{triregion}
\end{figure}
}

\remove{
\begin{observation}\label{rotate}
 Consider a line $\ell$ making an angle $\theta$ with $x$-axis. Also consider a point $p$ lying on or either side of $\ell$. If we rotate the line 
 $\ell$ as well the point $p$ by some angle $\alpha$ with respect to the $x$-axis, then the incidence property remains unchanged, i.e. for the 
 rotated line $\ell'$ the rotated point $p'$ will remain on or on that side of $\ell'$ as it was before rotation (see Figure \ref{triregion}).
\end{observation}
}

We use range query to decide whether the query point $q$ lies inside a triangle $\triangle_i \in {\cal \triangle}$.
For a point $q=(x(q),y(q))$, we consider a six-tuple $(x(q)$, $y(q)$, $x_a(q)$, $y_a(q)$, $x_c(q)$, $y_c(q))$,
where $(x(q),y(q))$ is the coordinates of the point $q$ with respect to normal coordinate axes, say $\nu$-axis, $\big(x_a(q)
=x(q)\cos{60^o}-y(q)\sin{60^o}$, $y_a(q)=x(q)\sin{60^o}+y(q)\cos{60^o}\big)$ are the $x$- and $y$-coordinates 
of $q$ with respect to the $\alpha$-axis, which is the rotation of the axes anticlockwise  by an amount of $60^o$ centered at the origin, 
and $\big(x_c(q)=-x(q)\cos{60^o}-y(q)\sin{60^o}$, $y_a(q)=-y(q)\sin{60^o}+x(q)\cos{60^o}\big)$ are the $x$- and $y$-coordinates 
of $q$ with respect to the $\beta$-axis rotated clockwise by an amount of $60^o$ centered at the origin.

Thus, we can test whether $q$ is inside a triangle $\Delta{abc}$ in $O(1)$ time (see Figure \ref{triregion})
by the above/below test of (i) the point $(x(q),y(q))$ with respect 
to a horizontal line $\overline{ab}$  along the $\nu$-axis, (ii) the point $(x_a(q),y_a(q))$ with respect 
to a horizontal line $\overline{ac}$  along the $\alpha$-axis, and (iii) the point $(x_c(q),y_c(q))$ with respect 
to a horizontal line $\overline{bc}$ along the $\beta$-axis.

\subsubsection{Computation for $\to$ $\scst$}
Here we obtain the smallest triangle among the members in $\cal \triangle$ that contains $q$. In other words, we need to 
identify the smallest triangle among the members in $\cal \triangle$ that are in the same side of
$q$ with respect to the lines $\triangle_i^\ell$, $\triangle_i^r$ and $\triangle_i^b$.
Similar to Section \ref{square}, this needs creation of a range searching data structure ${\cal T}_{b{\ell}r}$ with the points 
$\{(y(s_i^{\ell}),y_a(s_i^\ell),y_c(s_i^r))), i=1,2,\ldots, n\}$ in $I\!\!R^3$ as the preprocessing, where $s^\ell_i, s^r_i$ 
are two vertices of the triangle $\triangle_i \in {\cal \triangle}$ as defined above. Each internal node of the third level of this data 
structure contains the minimum side length ($\lambda$-value) of the triangles stored in the sub-tree rooted at that node. Given 
the query point $q$, it identifies a subset ${\cal \triangle}_{contained}$ in the form of sub-trees in ${\cal T}_{b{\ell}r}$, and during 
this process, the smallest triangle is also obtained. The preprocessing time and space  are $O(n\log n)$ and 
$O(n\frac{\log n}{\log\log n})$, and the query time is $O\Big((\frac{\log n}{\log \log n})^2\Big)$.

\subsubsection{Computation for $\tyt$ $\scst$} 
For each triangle $\tau_i \in {\cal \triangle}\setminus {\cal \triangle}_{contained}$, the lines $\triangle_i^\ell$,
$\triangle_i^r$ and $\triangle_i^b$ defines seven regions, namely $\{\reg_j, j=1,2,\ldots, 7\}$ 
as shown in Fig. \ref{triregion}, where $q \in \tau_i(\reg_i)$ implies $q \in \tau_i$. 
Thus, in order to consider $\tyt$ $\scst$, we need to consider those triangles $\tau_i$ such that $\tau_i\setminus \tau_i(\reg_1)$ regions contain $q$. 

Given the point $q$, we explain the handling of all triangles satisfying $q \in \tau_i(\reg_2)$, and $q \in \tau_i(\reg_3)$ separately
(see Figure \ref{figtriangle}). The cases for $\reg_4$ and $\reg_6$ are handled similar to that of $\reg_2$, and the cases for $\reg_5$
and $\reg_7$ are similar to that of $\reg_3$. The complexity results remain same as for $\to$ $\scst$.

\subsubsection{Handling of triangles with $q$ in $\reg_2$}
Given a query point $q$, a triangle $\tau_i$ is said to satisfy $\reg_2$-condition if $q \in \tau_i(\reg_2)$,
or in other words, $y(q) > y(s_i^\ell) ~\&~ y_a(q) < y_a(s_i^\ell) ~\&~ y_c(q) > y_c(s_i^r)$ (see Figure \ref{figtriangle}. (a)). Let ${\cal \triangle}_2$
be the set of triangles satisfying $\reg_2$-condition with respect to the query point $q$.  
Among all equilateral triangles formed by extending a triangle $\tau_i \in {\cal \triangle}_2$ such that it contains $q$,
the one with top vertex at $q$ will have a minimum size. Thus, among all triangles in ${\cal \triangle}_2 \subset {\cal \triangle}$,
we need to identify the one having a maximum $y(s_i^\ell)$ value. If the internal nodes at the third level of
${\cal T}_{b{\ell}r}$ data structure contains a maximum among $y(s_i^\ell)$ values of the triangles rooted at that node, 
then we can identify a triangle in ${\cal \triangle}_2$ whose extension contains $q$ at its top-vertex and is of minimum size
by searching  ${\cal T}_{b{\ell}r}$ in $O\Big((\frac{\log n}{\log \log n})^2\Big)$ time.

\begin{figure}[h]
\centering     
\subfigure[Handling  triangles with $q \in \reg_2$]{\label{facingtri}\includegraphics[width=45mm]{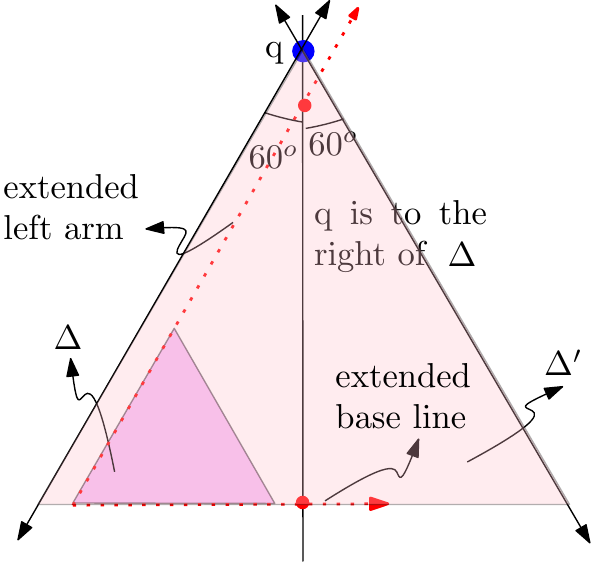}} \hspace{0.5in}
\subfigure[Handling  triangles with $q \in \reg_3$]{\label{facingntri}\includegraphics[width=40mm]{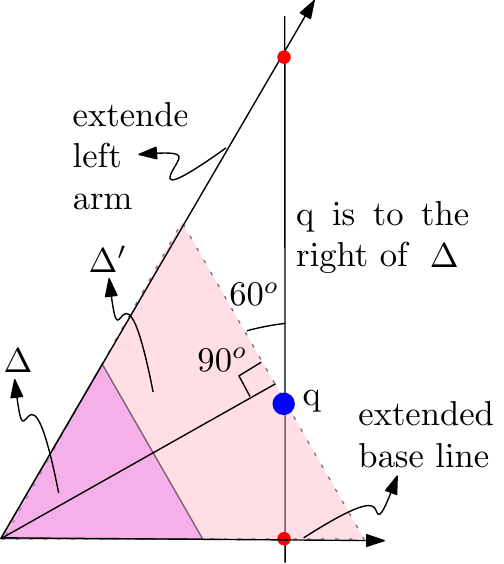}}
\caption{Classification of triangles with respect to $q$}
\label{figtriangle}
\end{figure}

\subsubsection{Handling of triangles with $q$ in $\reg_3$}
Given a query point $q$, a triangle $\tau_i$ is said to satisfy $\reg_3$-condition if $q \in \tau_i(\reg_3)$,
or in other words, $y(q) > y(s_i^\ell) ~\&~ y_a(q) > y_a(s_i^\ell) ~\&~ y_c(q) > y_c(s_i^r)$ (see Figure \ref{figtriangle}. (b)). Let ${\cal \triangle}_3$
be the set of triangles satisfying $\reg_3$-condition with respect to the query point $q$. Among all equilateral triangles
formed by extending a triangle $\tau_i \in {\cal \triangle}_3$ such that it contains $q$, the one whose right-arm passes
through $q$ will have the minimum size. Thus, among all triangles in ${\cal \triangle}_3 \subset {\cal \triangle}$, we need
to identify the one having maximum $y_c(s_i^\ell)$ value. If the internal nodes at the third level of ${\cal T}_{b{\ell}r}$
data structure contains maximum among $y_c(s_i^\ell)$ values of the triangles rooted at that node, then we can identify a triangle
in ${\cal \triangle}_3$ whose extension contains $q$ on its right arm and is of minimum size by searching the ${\cal T}_{b{\ell}r}$
data structure in $O\Big((\frac{\log n}{\log \log n})^2\Big)$ time.

\begin{theorem}
 Given a set of $n$ colored points, the smallest color spanning equilateral triangle of a fixed orientation containing query point $q$ can be reported in  $O\Big((\frac{\log n}{\log \log n})^2\Big)$ time using a data structure of size $O\Big(n(\frac{\log n}{\log \log n})\Big)$, built in 
$O(n\log n)$  time.
 \end{theorem}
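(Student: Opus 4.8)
The plan is to assemble the preprocessing and query machinery already developed for $\to$ $\scst$ and $\tyt$ $\scst$ and to read the stated bounds directly off Lemma~\ref{jaja} with $d=3$. First I would compute the set ${\cal \triangle}$ of all $O(n)$ minimal color spanning equilateral triangles of the fixed orientation in $O(n\log n)$ time via \cite{hasheminejad2015computing}, recording for each $\triangle_i$ its vertices $s_i^\ell, s_i^r, s_i^t$ and its side length $\lambda_i$. The central observation I would lean on is that, after introducing the three rotated frames (the $\nu$-, $\alpha$-, and $\beta$-axes), membership of $q$ in $\triangle_i$ and, more generally, the region $\reg_j$ of $\triangle_i$ containing $q$ are each characterised by a conjunction of three one-sided inequalities on the coordinates $y(s_i^\ell)$, $y_a(s_i^\ell)$, $y_c(s_i^r)$. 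Hence every case reduces to a three-dimensional dominance query over the point set $\{(y(s_i^\ell),y_a(s_i^\ell),y_c(s_i^r)) : i=1,\ldots,n\}$, which is precisely the regime of Lemma~\ref{jaja}.

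Next I would build a constant number of copies of the structure ${\cal T}_{b{\ell}r}$ of Lemma~\ref{jaja} (instantiated at $d=3$) on this point set, the copies differing only in the value aggregated at the third-level internal nodes: the minimum side length $\lambda_i$ for the $\to$ query, and, for each $\tyt$ region, the single extremal vertex coordinate governing the cost of expanding a triangle in that region so as to contain $q$. For $\to$ $\scst$ the query returns the smallest $\lambda_i$ among triangles containing $q$. For $\tyt$ $\scst$ I would run the region-by-region procedure already described; by the reflection symmetry of the fixed-orientation triangle only $\reg_2$ and $\reg_3$ need separate treatment ($\reg_4,\reg_6$ mirror $\reg_2$, and $\reg_5,\reg_7$ mirror $\reg_3$), and in each the minimum-size expansion is obtained by extremising one coordinate --- $\max y(s_i^\ell)$ for $\reg_2$ (placing $q$ at the apex) and $\max y_c(s_i^\ell)$ for $\reg_3$ (forcing the right arm through $q$). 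Finally I would return the minimum over the $\to$ outcome and all $\tyt$ outcomes.

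For the complexity, Lemma~\ref{jaja} with $d=3$ gives preprocessing time $O(n\log n)$, space $O\big(n(\tfrac{\log n}{\log\log n})\big)$, and query time $O\big((\tfrac{\log n}{\log\log n})^2\big)$ per structure; since only a constant number of structures are used and computing ${\cal \triangle}$ already costs $O(n\log n)$, the bounds in the statement follow by addition. The main obstacle I anticipate is not the range-searching bookkeeping but the geometric justification that, within each of the seven regions, the cost of the minimal containing expansion is monotone in exactly one of the transformed coordinates, so that a single \textbf{max} (or \textbf{min}) aggregate at the third-level nodes suffices to extract the optimum; verifying this claim carefully for $\reg_2$ and $\reg_3$, and confirming that the remaining regions genuinely reduce to these two under the triangle's reflection symmetry, is the step that requires the most care.
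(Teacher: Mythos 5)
Your proposal matches the paper's argument essentially step for step: the same set ${\cal \triangle}$ of $O(n)$ minimal triangles, the same three rotated coordinate frames reducing containment and region tests to three-dimensional dominance queries handled by Lemma~\ref{jaja} with $d=3$, the same third-level aggregates (minimum $\lambda_i$ for the $\to$ case, $\max y(s_i^\ell)$ for $\reg_2$, $\max y_c(s_i^\ell)$ for $\reg_3$, with the remaining regions by symmetry), and the same complexity accounting. The caveat you flag --- verifying that the expansion cost within each region is monotone in a single transformed coordinate --- is exactly the geometric claim the paper asserts for $\reg_2$ and $\reg_3$ without further elaboration, so your treatment is on par with the paper's.
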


\section{Smallest color spanning circle $(\scsc)$} \label{circle}
Here a set $P = P_1 \cup P_2 \cup \ldots \cup P_k$ of points of $k$ different 
colors in $I\!\!R^2$ needs to be preprocessed ($|P_j| \geq 1$), with an aim 
to compute a color-spanning circle ${\cal C}(q)$ of minimum radius containing 
any arbitrary query point $q$. We compute the minimum radius solution among 
two $\scsc$, namely (i) $\to$ $\scsc$ - the minimum radius color-spanning 
circle containing $q$ in its proper interior, and (ii) $\tyt$ $\scsc$ the 
minimum radius color-spanning circle with $q$ on its boundary. To solve 
$\tyt$ $\scsc$, we first consider a constrained version of the problem, where 
a line $\ell$ is given along with the point set $P$, and the query objective 
is to compute the smallest color-spanning circle ${\cal C}_\ell(q)$ of minimum 
radius containing the query point $q$ that is centered on the line $\ell$. We 
solve this constrained problem optimally and use this result to propose an 
$(1+\epsilon)$-factor approximation algorithm for the $\tyt$ $\scsc$ problem 
for a given constant $\epsilon$ satisfying $0 < \epsilon \leq 1$. Thus, our 
proposed algorithm produces an $(1+\epsilon)$-factor approximation solution 
for the $\scsc$ problem.

\subsection{Computation of $\to$ $\scsc$}
In order to compute $\to$ $\scsc$, we create a data structure ${\cal T}_1$. 
We first compute all possible minimal\footnote{A color-spanning circle 
$\chi$ is said to be {\it minimal} if there exists no 
color-spanning circle having a smaller radius that contains the same set of points 
of $k$ colors that are the representative points of $k$ colors in $\chi$.} 
color spanning circles with each point $p_i \in P$, on the boundary. This is 
essentially, the algorithm for computing the smallest color spanning circle 
by Huttenlocher et al. \cite{huttenlocher1993upper}, and is described below 
for the ease of presentation of our proposed method for computing $\to$ 
$\scsc$.

  \begin{figure}
  \centering{\includegraphics[scale=0.9]{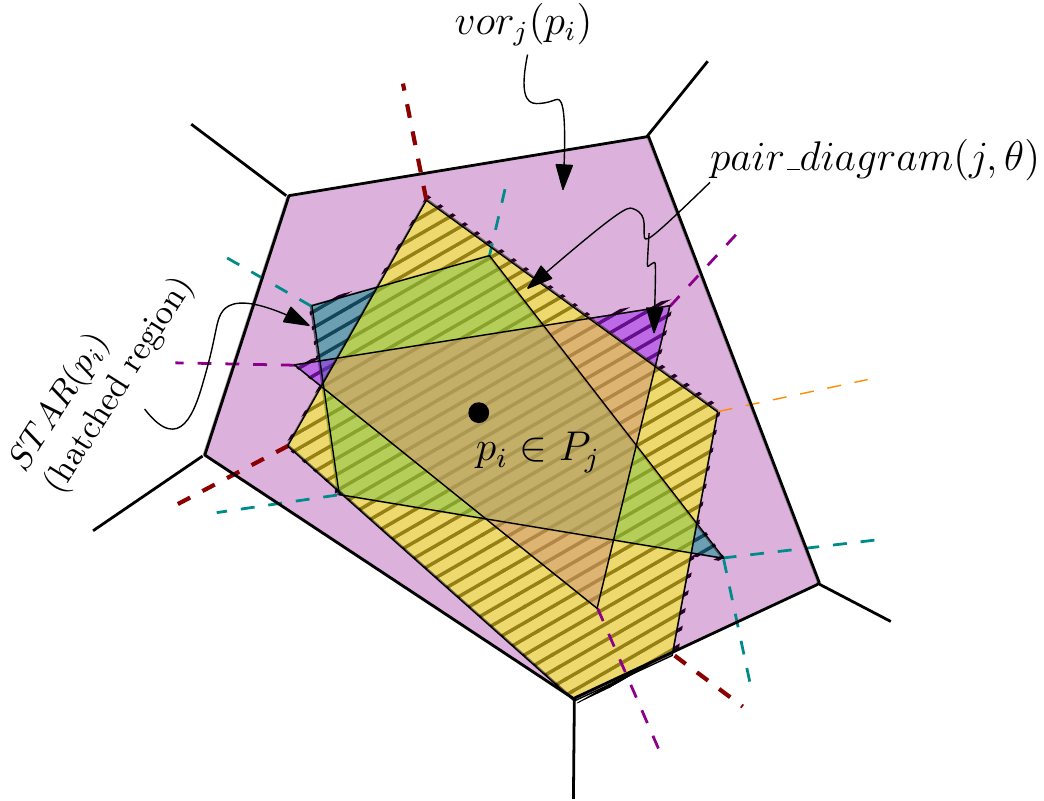}}
    \caption{Star shape polygon (hatched region) for point $p_{i}\in P_j$.}
    \label{fig:vorp}
 \end{figure}

\subsection{Construction of ${\cal T}_1$:}
For each color class $j$, we compute $VOR(P_j)$\footnote{$VOR(Q)$ is the 
Voronoi diagram of a point set $Q$}. We also compute $VOR(P_j\cup P_\theta)$ 
for all $\theta \in \{1,2,\ldots, k\}\setminus\{j\}$, which will be referred 
to as \blue{$\pdia(j,\theta)$}. The Voronoi cell of a point $p_i \in P_j$ in 
$VOR(P_j)$ (resp. $\pdia(j,\theta)$) will be referred to as $vor_j(p_i)$ 
(resp. $vor_{j\theta}(p_i)$). Observe that, for each point $p_i \in P_j$, 
$vor_{j\theta}(p_i)$ will always lie inside $vor_j(p_i)$ for all $\theta 
\in \{1,2,\ldots,k\}\setminus \{j\}$ (see Figure \ref{fig:vorp}). Now, we 
compute the $STAR(p_i) = \bigcup_{\theta \in \{1,2,\ldots,k\}\setminus\{j\}}
vor_{j\theta}(p_i)$, which is a {\it star-shaped polygon} inside $vor_j(p_i)$.
\begin{observation} \label{non-overlap}
\begin{itemize}
\item[(a)] For any color $j \in \{1,2, \ldots, k\}$, \{$STAR(p_i), 
p_i \in P_j$\} are non-overlapping. 
\item[(b)] For a point $p_i \in P_j$, the boundary of $STAR(p_i)$ is the 
locus of the center of the color spanning circles passing through $p_i$ 
(of color $j$) and at least one other point having color $\theta$, where 
$\theta \neq j$.
\end{itemize}
\end{observation}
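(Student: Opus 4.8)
For part (a) I would argue directly, taking the containment $STAR(p_i) \subseteq vor_j(p_i)$ noted above as the starting point. Inserting the extra sites $P_\theta$ into $VOR(P_j)$ only refines the diagram, so each $vor_{j\theta}(p_i) \subseteq vor_j(p_i)$, and hence $STAR(p_i) = \bigcup_{\theta \ne j} vor_{j\theta}(p_i) \subseteq vor_j(p_i)$. Since the cells $\{vor_j(p_i) : p_i \in P_j\}$ of the single-color diagram $VOR(P_j)$ tile the plane and have pairwise disjoint interiors, the regions $STAR(p_i)$ lie in pairwise disjoint cells and are therefore non-overlapping. This is essentially a one-line consequence of the stated containment.

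For part (b) I would first rewrite cell membership metrically. A center $c$ lies in $vor_{j\theta}(p_i)$ precisely when $dist(c,p_i) \le dist(c,p)$ for every $p \in (P_j \cup P_\theta)\setminus\{p_i\}$; equivalently, $c \in vor_j(p_i)$ (so $p_i$ is the nearest color-$j$ site to $c$) and the open disk of radius $dist(c,p_i)$ about $c$ contains no point of color $\theta$. Thus $c \in STAR(p_i)$ iff $c \in vor_j(p_i)$ and at least one color $\theta \ne j$ is missed by this open disk. Because each $vor_{j\theta}(p_i)$ is convex and contains $p_i$, the union $STAR(p_i)$ is star-shaped with respect to $p_i$, and along any ray from $p_i$ the boundary of the union is the farthest exit point among the cells $vor_{j\theta}(p_i)$.

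With this description the two inclusions of part (b) follow. For the forward direction, take $c \in \partial STAR(p_i)$, realized as the farthest exit along some ray, say the exit from $vor_{j\theta^*}(p_i)$. Lying on $\partial vor_{j\theta^*}(p_i)$ forces $dist(c,p_i) = dist(c,p'')$ for the nearest point $p'' \in P_{\theta^*}$, so the circle of radius $dist(c,p_i)$ about $c$ passes through both $p_i$ and $p''$; and since $c$ has already left every other cell, each remaining color has a representative within distance $dist(c,p_i)$, making the closed disk color-spanning. Hence $c$ is the center of a minimal color-spanning circle through $p_i$ and the differently-colored point $p''$. Conversely, let $c$ be the center of a minimal color-spanning circle through $p_i$ that also passes through a point $p''$ of some color $\theta^* \ne j$. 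Minimality gives radius $r = dist(c,p_i)$ with no color-$\theta^*$ point strictly inside, placing $c$ on $\partial vor_{j\theta^*}(p_i)$; color-spanningness places a representative of every color within distance $r$, so $c$ lies outside the interior of every $vor_{j\theta}(p_i)$. Thus $c$ is in the closure of $STAR(p_i)$ but not in its interior, i.e.\ on $\partial STAR(p_i)$.

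The step I expect to be the main obstacle is the bookkeeping at the boundary of the union, as opposed to any single cell. I must guarantee that at a boundary center exactly one color becomes \emph{tight} -- a colored point on the circle with none strictly inside -- while every other color is already captured, and I must dispose of the degenerate configurations (several colors tight simultaneously, coincident nearest points, or $c$ falling on $\partial vor_j(p_i)$), where the clean ray argument relies on the convexity of the cells and on the forced minimality of the resulting color-spanning circle to exclude spurious boundary points.
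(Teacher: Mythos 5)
The paper states Observation~\ref{non-overlap} without any proof, so there is no argument of record to measure yours against; what you have written is a correct filling-in of what the authors leave implicit. Part (a) is exactly the intended justification: $STAR(p_i)\subseteq vor_j(p_i)$ because refining $VOR(P_j)$ by inserting $P_\theta$ only shrinks cells, and the cells of $VOR(P_j)$ have pairwise disjoint interiors. For part (b), your metric reformulation ($c\in STAR(p_i)$ iff $p_i$ is the nearest $j$-colored site to $c$ and the open disk of radius $dist(c,p_i)$ about $c$ misses at least one color $\theta\neq j$) and the star-shapedness/farthest-exit description of $\partial STAR(p_i)$ are sound, and both inclusions go through essentially as you describe.

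One caveat deserves more weight than the passing mention you give it. In the forward direction you assert that lying on $\partial vor_{j\theta^*}(p_i)$ forces a tie between $p_i$ and some $p''\in P_{\theta^*}$; but the tie can instead be with another $j$-colored point, namely whenever the exit point lies on $\partial vor_j(p_i)$ itself. This is not merely an isolated degeneracy: a whole edge of $vor_{j\theta}(p_i)$ can coincide with an edge of $vor_j(p_i)$ (for instance when every point of $P_\theta$ is far from $vor_j(p_i)$), and on that portion of $\partial STAR(p_i)$ the circle centered there through $p_i$ passes through two points of color $j$ and may contain no point of color $\theta$ at all, hence is not color-spanning. So part (b) as literally stated is true only for the part of $\partial STAR(p_i)$ interior to $vor_j(p_i)$; your argument is correct once restricted there, and this restriction is harmless for the paper's subsequent use. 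Similarly, in the converse you quietly strengthen ``color spanning circles'' to \emph{minimal} ones (equivalently, you need $p_i$ and $p''$ to be the nearest sites of their colors to the center); without that, a large circle through $p_i$ and a differently-colored point whose center lies outside $vor_j(p_i)$ is a counterexample. That strengthened reading is clearly what the authors intend, since they immediately use the boundary to enumerate centers of minimal color-spanning circles through $p_i$.
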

Thus, we can get the centers $C_i$ of the minimal color-spanning circles 
passing through $p_i$ by drawing the perpendiculars of $p_i$ on the edges 
of $STAR(p_i)$. If the foot of the perpendicular $\pi$ of the point $p_i$ 
on an edge $e$ of  $STAR(p_i)$ lies on $e$ then $\pi$ is included in $C_i$, 
otherwise, the concave vertex $v$ (if any) of $STAR(p_i)$ attached to $e$ 
is included in $C_i$.

Thus, the smallest color-spanning circle in the point set $P$ with 
$p_i \in P$ on its boundary can be obtained by computing all the minimal 
color-spanning circles passing through $p_i$. We compute all possible 
minimal color spanning circles with  each $p_i \in P$, $i= \{1,2,\ldots,n\}$ 
on its boundary. Among a set $P=\{P_1, P_2, \ldots, P_k\}$ of $n$ points in 
$I\!\!R^2$, $P_j$ are the points assigned with color $j$, $|P_j| > 0$ for all 
$j=1,2,\ldots, k$, we use $\cal C$ to denote the set of all possible minimal 
color-spanning circles. In \cite{huttenlocher1993upper}, it is shown that   
${\cal C}=\Theta(nk)$. We now explain the method of preprocessing the 
members in $\cal C$ such that given any query point $q$, the member $\chi \in 
{\cal C}$ of minimum size that contains the point $q$ in its interior can be 
reported efficiently.     

We define a set ${\cal C}'$ of planes in $I\!\!R^3$. Each plane passes through 
the projection of a member of $\cal C$ on the unit paraboloid $\cal U$ centered 
at $(0,0)$. Now, if the projection $q'$ of the query point $q$ on the surface 
of $\cal U$ lies below a plane $\chi' \in {\cal C}'$, then the corresponding 
circle $\chi \in {\cal C}$ contains $q$ in $I\!\!R^2$. Thus, if we assign 
each plane $\chi' \in {\cal C}'$ a weight corresponding to the radius of 
the corresponding disk $\chi \in {\cal C}$, our objective will be to identify the smallest weight plane $\chi'_{min}$ (if any) such that $q'$ lies below 
the plane $\chi'_{min}$.  
 
The dual of the planes in ${\cal C}'$ are points, and the dual of the point 
$q'$ is a plane in $I\!\!R^3$. Thus, in the dual plane, our objective is to 
find the point of minimum weight (if any) that lies in the half-plane 
corresponding to $q'$ containing $(0,0)$. 

We answer this query as follows. We create a height-balanced binary tree 
${\cal T}_1$ with the weights of the dual points (in $I\!\!R^3$) 
corresponding to the planes in ${\cal C}'$. With each internal node 
(including the root) $v$ of ${\cal T}_1$ we attach a data structure 
${\cal T}_1$ ${\cal D}({\cal D}_v)$ with the points ${\cal D}_v$ in the 
subtree of ${\cal T}_1$ rooted at node $v$, that answers the following query:

\begin{itemize}
\item[\ding{237}] Given a set of points $Q$ in $I\!\!R^3$ ($|Q|=m$), ${\cal D}
(Q)$ is a data structure of size $O(m)$ which can be created in time 
$O(m\log m)$, and given any arbitrary half-plane, it can answer whether the 
half-plane is empty or not in $O(\log m)$ time (see Table 3 of 
\cite{agarwal1999geometric}).
\end{itemize}

Given the half-plane corresponding to the query point $q'$ (in $I\!\!R^3$), 
we start searching from the root of ${\cal T}_1$. At a node $v$ in 
${\cal T}_1$, if the search result is ``non-empty'', we proceed to the 
left-subtree of $v$, otherwise we proceed to the right subtree of $v$. 
Finally, we can identify the smallest circle in $\cal C$ containing $q$ 
at a leaf of ${\cal T}_1$. Thus, we have the following result: 

\remove{The query point $q$ is projected on the paraboloid $\cal U$, stays as a $3$d point $q'$ there. In the dual it becomes a half-plane $h$. 

Now we consider the following results for {\it half-plane emptiness query}.

\begin{siderules}
\begin{itemize}
 \item[\ding{237}]  Dobkin et al. \cite{dobkin1990implicitly} showed that, given a convex polyhedra $R$ of $n$ vertices and a half-plane $H$, for any direction $d$, the minimum distance $H$ to be translated in direction $d$ in order to make $R$ and $H$ interior disjoint can be computed in $\log n$ time using a linear size data structure. 
 \item[\ding{237}] Afsani et al. \cite{afshani2009optimal} showed that, given a set $H$ of $n$ planes in $3$d, we can process $H$ into a linear size
 data structure in expected $O(n\log n)$ time such that the lowest plane queries can be answered in $O(\log n)$ time. If the query point $q$ above all
 the planes in $H$, that implies that $q\cap H=\emptyset$; implies $q$ is not inside any circle. On the other hand, if $q$ is below any circle, it 
is below the last circle since for each circle, the portion of the corresponding plane outside the circle is below the unit paraboloid  $\cal U$ 
on which the disks and the point $q$ is projected. Hence the portion of the plane containing the disk is above the paraboloid.
\end{itemize}
\end{siderules}
These lead to the following Lemma;
}

\begin{lemma}\label{empty-query}
 Given the set  $\cal C$ of $O(nk)$ circles, we can preprocess them into a data structure ${\cal T}_1$ in $O(nk\log^2 n)$ time using $O(nk\log n)$ space
 such that for a query point $q$, one can determine the smallest circle in $\cal C$ containing $q$ in  $O(\log^2 n)$ time. 
\end{lemma}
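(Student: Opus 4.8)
The plan is to read the claimed bounds off the ``tree of half-plane-emptiness structures'' that the preceding construction has set up, once the circle-containment test has been reduced to a half-plane query. First I would record the reduction already in hand: the lifting map onto the paraboloid $\cal U$ together with point-plane duality turns each of the $N=|{\cal C}|=\Theta(nk)$ minimal circles into a weighted point in $I\!\!R^3$ (weight $=$ radius), turns the query point $q$ into a half-plane $h$, and turns ``$\chi$ contains $q$'' into ``the dual point of $\chi$ lies in $h$.'' Hence the query amounts to: among the dual points inside $h$, return one of minimum weight.

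For correctness I would make the weighted binary search explicit. The tree ${\cal T}_1$ is balanced and ordered by weight, and the structure ${\cal D}({\cal D}_v)$ attached at $v$ answers half-plane emptiness for the points ${\cal D}_v$ of its subtree. The search maintains the invariant that the minimum-weight point lying in $h$, if one exists, is in the subtree currently visited: at an internal node I test (via the structure at the left child) whether the smaller-weight half contains a point of $h$; if it is non-empty the minimum lies there and I descend left, otherwise it is the node's own point or in the larger-weight half and I descend right. Since each emptiness test is exact, the leaf reached is precisely the smallest circle of ${\cal C}$ containing $q$, and ``no circle contains $q$'' is detected when the final membership test fails.

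The three quantitative bounds then follow from the cited emptiness structure (Table 3 of \cite{agarwal1999geometric}), which on $m$ points has $O(m)$ size, $O(m\log m)$ build time, and $O(\log m)$ query time. Each dual point lies in ${\cal D}_v$ for only its $O(\log N)$ ancestors $v$, so $\sum_v|{\cal D}_v|=O(N\log N)$ and, the structures being linear, the total space is $O(N\log N)$. Grouping nodes by level, the points of one level partition ${\cal C}$, so the build cost on a single level is $\sum_v O(|{\cal D}_v|\log|{\cal D}_v|)=O(N\log N)$, and over the $O(\log N)$ levels the preprocessing is $O(N\log^2 N)$. A query follows one root-to-leaf path of $O(\log N)$ nodes and spends $O(\log N)$ on the emptiness test at each, for $O(\log^2 N)$ total.

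The only genuine subtlety is the base of the logarithm: the structure is naturally sized in $N=\Theta(nk)$, whereas the statement is in $n$. Because the $n$ points carry at most $n$ distinct colors, $k\le n$, so $\log N=O(\log n)$; substituting $N=\Theta(nk)$ and $\log N=O(\log n)$ converts $O(N\log N)$, $O(N\log^2 N)$ and $O(\log^2 N)$ into the stated $O(nk\log n)$ space, $O(nk\log^2 n)$ preprocessing time, and $O(\log^2 n)$ query time. I expect this reconciliation of the logarithm's base, rather than any part of the construction, to be the single place that needs care.
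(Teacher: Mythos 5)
Your argument follows the paper's proof essentially verbatim: the same weight-ordered tree ${\cal T}_1$ with half-plane-emptiness structures at the nodes, the same descend-left-if-non-empty binary search, and the same per-level disjointness accounting for the $O(nk\log n)$ space, $O(nk\log^2 n)$ build time, and $O(\log^2 n)$ query. You actually state two details more carefully than the paper does --- that the emptiness test at a node must be performed on the \emph{left child's} structure to decide the descent, and that $\log N = O(\log n)$ via $k \le n$ reconciles the bases of the logarithms --- so there is nothing to correct.
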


\begin{proof}
As the set of points attached at the nodes in each level of ${\cal T}_1$ are disjoint, the total size of the corresponding $\cal D$ data  structures is $O(nk)$, and the total time 
for their construction is $O(nk\log n)$. Thus, the preprocessing time and space complexity follow. The height of ${\cal T}_1$ is $O(\log n)$, and at each level of ${\cal T}_1$ the aforesaid query needs to be performed at exactly one node. As the half-plane emptiness query complexity is $O(\log n)$, the  complexity result of $\to$ $\scsc$ query follows.    
\end{proof}

\subsection{Constrained version of $\tyt$ $\scsc$ query}\label{cons_cssc}
Before considering $\tyt$ $\scsc$, we consider the following constrained version of 
the $\scsc$ problem:
 \begin{siderules}
  Given a colored point set $P$, a line $\ell$ and a query point $q$, report the smallest color-spanning circle centered
  on the line $\ell$ that contains $q$.
 \end{siderules}

For any point $\rho\in \mathbb{R}^2$, we use $h(\rho)$ to denote the 
horizontal line passing through $\rho$. Without loss of generality, we assume 
that all the points in $P$ are above the line $\ell$. This assumption is 
legitimate since, for an arbitrary point set $P$, we can create another point 
set $P'$ where all the points in $P$ that are above $\ell$ will remain as it 
is, and for each point below $\ell$, we consider its mirror image with respect 
to the line $\ell$. It is easy to observe that, the optimum solution of the 
constrained problem for $P$ and for the transformed point set $P'$ are the same.   

\begin{observation} \label{def1}
A {\em minimal color-spanning circle} among a set of colored points $P$ 
centered on a given line $\ell$ (i) either passes through two points of $P$ 
having different colors, or (ii) passes through a single point in $P$ such 
that the vertical projection of that point on $\ell$ is the radius of the 
said color-spanning circle.   
\end{observation}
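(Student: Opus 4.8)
The plan is to reduce the statement to a one–dimensional optimization along $\ell$. Parametrize $\ell$ by arc length $t$, write the center as $c(t)$, and for each color $j$ set $r_j(t)=\min_{p\in P_j}\mathrm{dist}(c(t),p)$, the smallest radius needed at $c(t)$ to capture color $j$. A circle centered at $c(t)$ is color-spanning iff its radius is at least $R(t):=\max_{1\le j\le k} r_j(t)$. First I would argue that a minimal color-spanning circle centered on $\ell$ has radius exactly $R(t)$ for some $t$ and that its center is a local minimum of $R$: if the radius exceeded $R(t)$ we could shrink it in place, and if $t$ were not a local minimum of $R$ we could slide the center along $\ell$ in the decreasing direction and shrink. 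Since $R(t)\rightarrow\infty$ as $|t|\rightarrow\infty$, such local minima exist, so the whole claim comes down to classifying the local minima of $R$.

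Next I would exploit the structure of $R$ as the upper envelope of the per-color lower envelopes $r_j$. Writing the foot of $p$ on $\ell$ at parameter $x_p$ and its distance to $\ell$ as $d_p$, each distance function $\mathrm{dist}(c(t),p)=\sqrt{(t-x_p)^2+d_p^2}$ is smooth and strictly convex with a unique minimum at $t=x_p$. Hence every $r_j$ is a lower envelope of strictly convex functions, and at any crossing of two of them $r_j$ has a \emph{concave} (downward) corner, i.e.\ a local maximum, never a local minimum; its only local minima are the smooth valleys at the projection parameters $x_p$. The key consequence I would extract is that $r_j$ has no valley-type kink, so any valley-type kink of $R=\max_j r_j$ must come from two \emph{different} colors becoming simultaneously active.

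With this in hand the case analysis is immediate. At a local minimum $t^{*}$ of $R$, either $R$ is differentiable there or it has a kink. In the differentiable case $R'(t^{*})=0$ and near $t^{*}$ we have $R(t)=\mathrm{dist}(c(t),p)$ for a single active point $p$; the vanishing derivative forces $c(t^{*})$ to be the foot of the perpendicular from $p$ to $\ell$ (its vertical projection when $\ell$ is horizontal), so the radius equals the length of the segment from $p$ to that projection --- this is case (ii). In the kink case, minimality gives left derivative $\le 0\le$ right derivative, so it is a valley kink, which by the previous paragraph cannot be internal to a single color; thus two points $p_1,p_2$ of distinct colors satisfy $\mathrm{dist}(c(t^{*}),p_1)=\mathrm{dist}(c(t^{*}),p_2)=R(t^{*})$ and both lie on the boundary --- this is case (i).

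The main obstacle is the kink analysis of the second paragraph: making fully rigorous that a same-color crossing of the lower envelope is always a local maximum, hence cannot supply the valley required for a minimum of $R$, and then handling the degenerate configurations where several points are simultaneously active --- for instance where a perpendicular foot coincides with a two-color crossing, where the two active distance functions meet tangentially rather than transversally, or where three or more points lie on the minimal circle. I would phrase the argument around the one-sided derivative certificate so that each such degeneracy still exhibits either a single perpendicular-foot point or an equidistant pair of differently colored points, letting the overlapping cases fold into (i) or (ii) without a separate treatment.
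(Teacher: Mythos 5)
Your argument is correct. The paper itself offers no proof of Observation~\ref{def1} --- it is asserted as self-evident --- so there is nothing to match against; but your reduction to the function $R(t)=\max_j\min_{p\in P_j}\mathrm{dist}(c(t),p)$ is exactly the object the paper introduces right afterwards as the upper envelope ${\cal F}(P)$ of the per-color lower envelopes ${\cal F}(P_c)$, and your classification of its local minima (smooth valley at a perpendicular foot versus convex kink forced by two distinct colors) is precisely the paper's later dichotomy between type~(ii) and type~(i) vertices of ${\cal F}(P)$. So you have, in effect, supplied the missing justification in the paper's own language. Two small points worth tightening if you write this up: the tangency degeneracy you worry about cannot occur, since equal values and equal derivatives of $\sqrt{(t-x_p)^2+d_p^2}$ and $\sqrt{(t-x_{p'})^2+d_{p'}^2}$ force $p=p'$, so all crossings are transversal and the one-sided derivative argument closes cleanly; and the cleanest phrasing of the final case split is by the number of \emph{active colors} at the minimizer (at least two active colors gives (i) immediately; exactly one active color makes $R$ coincide with a single lower envelope $r_j$ locally, whose concave kinks can never be local minima, forcing the perpendicular-foot situation of (ii)), which absorbs the "several points simultaneously active'' configurations without treating differentiability of $R$ as the primary dichotomy.
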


Assume the line $\ell$ as the $x$-axis. The squared distance of a point 
$p_i=(\alpha,\beta)$ from the point $(x,0)$ on the $x$-axis is 
$y=d(p_i)=(x-\alpha)^2+\beta^2$ (see Figure \ref{fig:dist_func}), 
which satisfies the equation of a parabola. We call it the {\em distance curve} $f(p_i,\ell)$. 

\begin{figure}[h]
\begin{center}
 \includegraphics[scale=0.8]{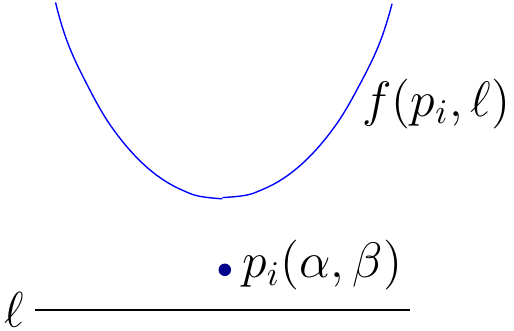}
 \caption{Distance function.}
 \label{fig:dist_func}
\end{center}
\end{figure}

\begin{observation}\label{pseudoline}
For any two distinct points $p_i$ and $p_j$ in $P$, the curves $f(p_i,\ell)$ and $f(p_j,\ell)$ 
intersect exactly at one point.
\end{observation}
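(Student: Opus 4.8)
The plan is to reduce the intersection problem to the root-finding problem for the difference of the two distance functions, exploiting the fact that both parabolas share the same leading coefficient. Writing $p_i=(\alpha_i,\beta_i)$ and $p_j=(\alpha_j,\beta_j)$ with $\ell$ taken as the $x$-axis, the two curves are $f(p_i,\ell)\colon y=(x-\alpha_i)^2+\beta_i^2$ and $f(p_j,\ell)\colon y=(x-\alpha_j)^2+\beta_j^2$. An intersection point corresponds to an abscissa $x$ at which the two right-hand sides agree, so I would equate them and analyze the resulting equation in the single variable $x$.

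The key observation is that subtracting one equation from the other cancels the $x^2$ terms, since both parabolas open upward with unit leading coefficient. What remains is the affine equation
\[
2(\alpha_j-\alpha_i)\,x=(\alpha_j^2+\beta_j^2)-(\alpha_i^2+\beta_i^2).
\]
Whenever $\alpha_i\neq\alpha_j$ the coefficient of $x$ is nonzero, so this linear equation has exactly one solution $x^\ast$; substituting $x^\ast$ back into either curve produces a single intersection point $\big(x^\ast,\,f(p_i,\ell)(x^\ast)\big)$. Hence two distinct distance curves meet in exactly one point, which is precisely the pseudoline property that later keeps the combinatorial complexity of their lower envelope linear.

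The one place that needs care, and the main obstacle to a literal ``exactly one'' claim, is the degenerate case $\alpha_i=\alpha_j$ in which the two points share the same abscissa. Then the difference collapses to the constant $\beta_j^2-\beta_i^2$, which is nonzero because the points are distinct and, by the earlier assumption, both lie above $\ell$ so that $\beta_i,\beta_j>0$ forces $\beta_i^2\neq\beta_j^2$; geometrically the two congruent parabolas are vertical translates of one another and never meet. I would dispose of this case by invoking the standard general-position assumption that no two input points share an $x$-coordinate, which can always be enforced by an infinitesimal rotation of $\ell$ (or by symbolic perturbation) without affecting the optimum. Under this assumption the affine reduction above always applies, and the statement holds verbatim.
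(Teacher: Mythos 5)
Your proof is correct and is essentially the algebraic form of the paper's one-line argument: the linear equation you obtain after the $x^2$ terms cancel is precisely the condition that a point of $\ell$ be equidistant from $p_i$ and $p_j$, i.e., that it lie on the perpendicular bisector of $[p_i,p_j]$, which is exactly how the paper phrases it. If anything you are more careful than the paper, which silently ignores the degenerate case $\alpha_i=\alpha_j$ (perpendicular bisector parallel to $\ell$), where the two curves do not meet at all and one needs either the general-position assumption you invoke or a harmless weakening of ``exactly one'' to ``at most one'' for the envelope bound to go through.
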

\begin{proof}
Follows from the fact that there is exactly one point on the line $\ell$ 
which is equidistant from two points $p_i$ and $p_j$, and it is the point 
of intersection of the perpendicular bisector of $[p_i,p_j]$ and the line 
$\ell$.  
\end{proof}

From now onwards, we will use $P$ to denote the transformed point set and use $P_c \subseteq P$
to denote the set of points of color $c$.  Consider the distance functions $f^c_i$ for all the
points $p_i\in P_c$, and compute their lower envelope ${\cal F}(P_c)$. Thus, for a point 
$\xi$ on the $x$-axis, its nearest point $p_i \in P_c$ corresponds to the function 
$f(p_i,\ell)$ in ${\cal F}(P_c)$ that is hit by the vertical ray drawn at the point $\xi$. The radius of 
the smallest circle centered at $\xi$ that contains at least one point of each 
color is the smallest vertical line segment originating at $\xi$ and hits all the functions 
$\{{\cal F}(P_c), c=1,2,\ldots, k\}$. Thus, the radius of the smallest 
color-spanning circle ${\cal C}(\ell)$ centered on $\ell$ (assumed to be the $x$-axis), 
is the point with minimum $y$-coordinate in the upper envelope ${\cal F}(P)$ of 
all the functions ${\cal F}(P_c), c=1,2,\ldots, k$ (see the dotted curve in Figure 
\ref{fig:upper}).  

\begin{figure}[h]
\vspace{-0.2in}
\begin{center}
 \includegraphics[scale=0.65]{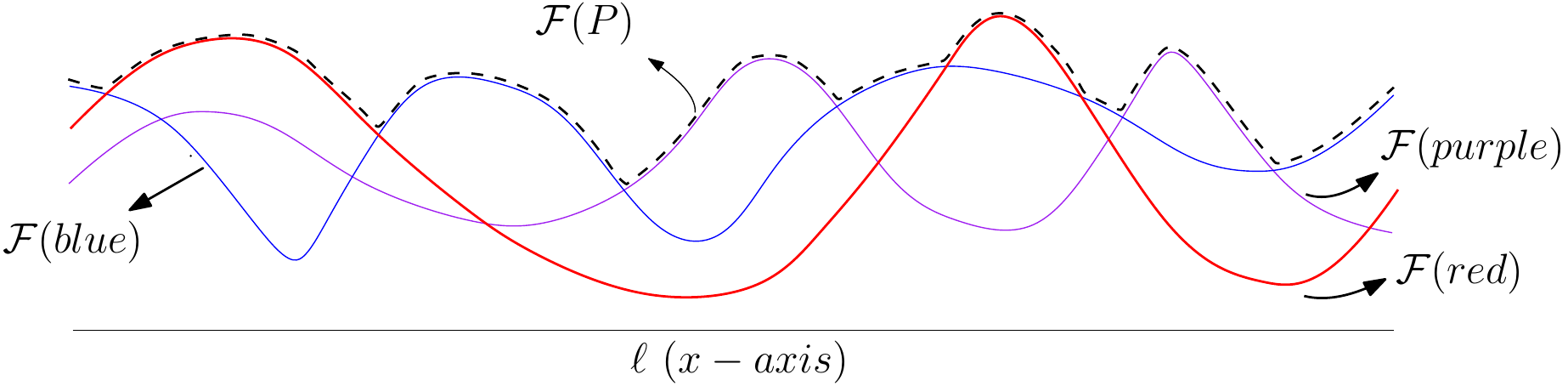}
 \caption{Upper envelope of the lower envelope of all the color classes}
 \label{fig:upper}
\end{center}
\vspace{-0.2in}
\end{figure}

\begin{lemma}
The combinatorial complexity of the function ${\cal F}(P)$ is $O(n\alpha(n))$ 
and can be computed in $O(n \log n)$ time.
\end{lemma}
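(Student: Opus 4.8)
The plan is to reduce the whole question about parabolic arcs to a question about line segments, so that the standard Davenport--Schinzel machinery applies cleanly. First I would observe that every distance curve splits off a common $x^2$ term: writing $p_i=(\alpha_i,\beta_i)$, we have $f(p_i,\ell)(x)=x^2+L_i(x)$ with $L_i(x)=-2\alpha_i x+(\alpha_i^2+\beta_i^2)$ a \emph{line}. Hence the per-color lower envelope satisfies ${\cal F}(P_c)=x^2+\underline{L}_c$, where $\underline{L}_c=\min_{p_i\in P_c}L_i$ is the lower envelope of the lines of color $c$, and the overall upper envelope satisfies ${\cal F}(P)=x^2+G$ with $G:=\max_{c}\underline{L}_c$. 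Since adding the fixed term $x^2$ only turns each linear piece into a parabolic arc without moving any breakpoint, the combinatorial complexity of ${\cal F}(P)$ equals that of $G$, and I may analyze $G$ instead.

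Next I would bound the pieces. Each $\underline{L}_c$ is the lower envelope of $n_c=|P_c|$ lines, hence a concave piecewise-linear total function with $O(n_c)$ pieces, computable in $O(n_c\log n_c)$ time by sorting the lines by slope; summing over colors yields a collection $\Sigma$ of $N=\sum_c O(n_c)=O(n)$ line segments (the end pieces being rays), produced in $O(n\log n)$ time overall. The key identity is that $G=\max_c\underline{L}_c$ is exactly the \emph{upper envelope} of $\Sigma$: within a single color the segments of $\underline{L}_c$ have pairwise-disjoint $x$-projections, so at any abscissa $x$ precisely one segment per color is present and equals $\underline{L}_c(x)$; taking the pointwise maximum over all present segments of all colors therefore reproduces $\max_c\underline{L}_c(x)=G(x)$.

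Then I would invoke the classical bound on envelopes of segments. Any two members of $\Sigma$ are subsegments of lines and hence cross at most once, so the sequence of arcs encountered along the upper envelope of $\Sigma$ is a Davenport--Schinzel sequence of order $3$; its length, and thus the number of breakpoints of $G$, is $O(\lambda_3(N))=O(N\alpha(N))=O(n\alpha(n))$. The same structure is computable in optimal $O(N\log N)=O(n\log n)$ time (Hershberger's segment upper-envelope algorithm, or a divide-and-conquer that repeatedly merges the color envelopes). Reattaching the $x^2$ term arc by arc then gives ${\cal F}(P)$ within the same complexity and time bounds, as claimed.

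I expect the main obstacle to be pinning down the $O(n\alpha(n))$ estimate rather than settling for a weaker near-linear bound: this is exactly where the order-$3$ Davenport--Schinzel bound is essential, and it is what the reduction to line segments is designed to unlock. Observation~\ref{pseudoline}, which states that the original curves are pseudolines (pairwise single crossing), is what places us in the order-$3$ regime in the first place; the $x^2$ trick is merely the most convenient way to expose it and to reuse the off-the-shelf segment-envelope algorithm. Tightness is not needed here, since only an upper bound is asserted.
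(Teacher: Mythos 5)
Your proof is correct and follows the same route as the paper: per-color lower envelopes of pairwise-single-crossing curves, each of complexity $O(n_c)$ and built in $O(n_c\log n_c)$ time, followed by the upper envelope of the resulting $O(n)$ arcs, which is bounded by the order-$3$ Davenport--Schinzel bound $O(\lambda_3(n))=O(n\alpha(n))$ and computed in $O(n\log n)$ time. Your one refinement --- subtracting the common $x^2$ term so that the parabolas become genuine lines and the envelope arcs become genuine line segments --- is a cleaner justification than the paper's bare assertion that the curves ``behave like pseudolines'' and ``can be treated as pseudo-line segments,'' and it also quietly handles the degenerate case of two points sharing an $x$-coordinate (parallel lines, zero crossings) that the statement of Observation~\ref{pseudoline} technically misses, without affecting any of the claimed bounds.
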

\begin{proof}
The functions $f(p_i,\ell) \in {\cal F}(P_c)$ behave like pseudolines since each pair 
of functions $f(p_a,\ell), f(p_b,\ell) \in {\cal F}(P_c)$ intersect at exactly one point (see 
Observation \ref{pseudoline}). Thus, the size of their lower envelope is 
$O(n_i)$, and can be computed in $O(n_i\log n_i)$ time 
\cite{vsarir1995davenport}, where $n_i = |P_i|$. 

While computing the upper envelope ${\cal F}(P)$ of ${\cal F}(P_c), c=1,2,
\ldots, k$, consider the parabolic arc segments that are present in all the 
curves $\{{\cal F}(P_c), c=1,2,\ldots, k\}$. These can be treated as 
pseudo-line segments, and as mentioned above, they are $O(n)$ in number. 
Their upper envelope is of size $O(n\alpha(n))$, and can be computed in 
$O(n\log n)$ time \cite{vsarir1995davenport}, where $\alpha(n)$ is the 
inverse of the Ackermann function, which is a very slowly growing function 
in $n$.  
\end{proof}

{\bf Query answering:} During the query, a new point $q$ ($\not\in P$) is 
given, and the objectives are the following:
\begin{siderules}

\begin{itemize}
 \item[1.] compute the smallest color-spanning circle ${\cal C}_1(q,\ell)$ centered on the
 line $\ell$ and containing the point $q$ in its proper interior ({\em constrained $\to$ $\scsc$}), and 
\item[2.] compute the smallest color-spanning circle ${\cal C}_2(q,\ell)$ centered on
the line $\ell$ and the point $q$ on its boundary ({\em constrained $\tyt$ $\scsc$}). 
\end{itemize} 
\end{siderules}

\begin{lemma} \label{two_intersection}
$f(q,\ell)$ intersects ${\cal F}(P)$ in exactly two points, to the left and right 
of the vertical line through $q$ respectively. 
\end{lemma}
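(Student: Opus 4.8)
The plan is to analyse the single scalar function $g(x) := f(q,\ell)(x) - {\cal F}(P)(x)$ on the whole $x$-axis and show that it changes sign exactly once on either side of the vertical line $x=x(q)$. The enabling observation is that every arc of ${\cal F}(P)$ is the graph of a distance curve $(x-\alpha_i)^2+\beta_i^2$ of some $p_i=(\alpha_i,\beta_i)\in P$, and $f(q,\ell)(x)=(x-x(q))^2+y(q)^2$ is a \emph{congruent} parabola (identical leading term). Hence on each arc the quadratic terms cancel and $g$ is affine, with slope $2(\alpha_i-x(q))$. Thus $g$ is continuous and piecewise linear over all of $\mathbb{R}$, and the points where $f(q,\ell)$ meets ${\cal F}(P)$ are precisely the roots of $g$; so the task reduces to locating these roots.

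First I would localise the region where $g\le 0$. Writing ${\cal F}(P)=\max_c {\cal F}(P_c)$, we have $f(q,\ell)\le {\cal F}(P)$ iff $f(q,\ell)\le {\cal F}(P_c)$ for \emph{some} colour $c$. For a fixed colour, $f(q,\ell)(x)-{\cal F}(P_c)(x)=\max_{p_i\in P_c}\big[f(q,\ell)(x)-(x-\alpha_i)^2-\beta_i^2\big]$ is a pointwise maximum of affine functions, hence convex; therefore $I_c:=\{x:f(q,\ell)(x)\le {\cal F}(P_c)(x)\}$ is an interval, and it is \emph{bounded} exactly when colour $c$ has points on both sides of the line $x=x(q)$ (one affine piece of positive and one of negative slope). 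Consequently $\{x:g(x)\le 0\}=\bigcup_c I_c$, and the intersections we seek are exactly the boundary points of this union.

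It then remains to show that $\bigcup_c I_c$ is a single \emph{bounded} interval having $x(q)$ in its interior; its two endpoints give the two claimed intersections. I would prove connectedness by exhibiting a point common to all the $I_c$: the foot $\xi_0=(x(q),0)$ of $q$ on $\ell$, since $f(q,\ell)(x(q))=y(q)^2\le (x(q)-\alpha_i)^2+\beta_i^2$ for every $p_i$ in the relevant configuration, so $\xi_0\in I_c$ for all $c$. A finite union of intervals sharing a common point is itself an interval, and if each $I_c$ is bounded (every colour bracketing $x(q)$) the union is a bounded interval. Since $x(q)$ lies in its interior, $g(x(q))<0$ while $g\to+\infty$ for $|x|$ large, forcing exactly one root to the left and one to the right of $x=x(q)$.

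The main obstacle will be this last step, namely ruling out \emph{extra} sign changes. Convexity of each per-colour difference only guarantees that $f(q,\ell)$ crosses a \emph{single} ${\cal F}(P_c)$ at most twice; across the upper envelope the active arc, and hence the slope $2(\alpha_i-x(q))$ of $g$, need not vary monotonically, so a priori $g$ could oscillate and meet ${\cal F}(P)$ more than twice, or $\bigcup_c I_c$ could split into several components. The crux is therefore to establish that this union has neither a gap nor an extra component between its outer endpoints --- equivalently, that once one leaves the set of centres enclosing $q$ one never re-enters it. The common-point property of $\xi_0$ together with the bracketing of $x(q)$ by each colour is exactly what is needed to force a single bounded interval, and I expect verifying these two geometric facts (which encode that $q$ sits below the points and within their horizontal spread) to be the delicate part of the argument.
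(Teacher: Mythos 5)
Your reduction to the piecewise-linear function $g=f(q,\ell)-{\cal F}(P)$, with each per-colour difference $h_c=f(q,\ell)-{\cal F}(P_c)$ a maximum of affine functions (hence convex, so $I_c=\{x: h_c(x)\le 0\}$ is an interval), is sound and is a genuinely different route from the paper's proof, which instead assumes a second crossing on one side of $q$ and derives a contradiction by counting intersections with a single arc via Observation~\ref{pseudoline}. However, the step you yourself flag as delicate is where the argument genuinely breaks. The claim that $\xi_0=x(q)$ lies in every $I_c$ amounts to $y(q)^2\le\min_{p_i\in P_c}\,((x(q)-\alpha_i)^2+\beta_i^2)$ for every colour $c$, i.e.\ that $q$ is no farther from the foot of its perpendicular on $\ell$ than the nearest point of each colour is. Nothing in the hypotheses guarantees this: if some point of colour $c$ lies between $q$ and $\ell$, or merely close to $(x(q),0)$, then $x(q)\notin I_c$ (indeed $I_c$ may be empty) and the common-point argument collapses. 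Note that $I_c$ is exactly $\ell$ intersected with the Voronoi cell of $q$ in $VOR(P_c\cup\{q\})$; these cells are convex and all contain $q$, but $q$ is off the line, so their traces on $\ell$ need not share a point.

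This is not a repairable presentation issue, because $\bigcup_c I_c$ can genuinely be disconnected. Take $\ell$ the $x$-axis, $q=(0,5)$, colour $1=\{(-20,5),(2,1)\}$ and colour $2=\{(20,5),(-2,1)\}$ (all points above $\ell$, as the paper assumes). A direct computation gives $I_1=[-10,-5]$ and $I_2=[5,10]$, so $g\le 0$ exactly on $[-10,-5]\cup[5,10]$ and $f(q,\ell)$ crosses ${\cal F}(P)$ transversally at the four points $x=-10,-5,5,10$, two on each side of $v(q)$. Hence the statement as given needs an additional hypothesis --- essentially your common-point condition, that the minimal colour-spanning circle centred at $(x(q),0)$ already contains $q$ --- and no argument can close the gap without importing it. (The paper's own proof silently passes over the same phenomenon: its claim that reaching a second arc $f(p_j,\ell)$ forces a second crossing of $f(p_i,\ell)$ fails when a steeper arc of the upper envelope overtakes $f(q,\ell)$ from below beyond the vertex $v$, which is exactly what happens in the example.) Your framework at least makes the missing hypothesis explicit; to salvage the lemma you should state that condition and use it where you currently assert $\xi_0\in\bigcap_c I_c$, and separately justify boundedness of the relevant $I_c$'s.
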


\begin{proof}
For a contradiction, let us assume that $f(q,\ell)$ intersects ${\cal F}(P)$ 
more than once to the right of $q$. By Observation \ref{pseudoline}, 
$f(q,\ell)$ can't intersect the same arc-segment of ${\cal F}(P)$ more 
than once. Thus, we consider the case that $f(q,\ell)$ intersects more than 
one arc-segment of ${\cal F}(P)$ to the right of $q$. Let the two consecutive 
points of intersection of $f(q,\ell)$ and ${\cal F}(P)$ to the right side of 
$q$ be with the arc-segments $f(p_i,\ell)$ and $f(p_j,\ell)$ in ${\cal F}(P)$ 
respectively, $i\neq j$. But, observe that, here $f_q$ passes over the vertex 
$v$ on $f(p_i,\ell)$, which is the point of intersection of $f(p_i,\ell)$ and 
the next arc-segment $f(p_{i'},\ell)$, in ${\cal F}(P)$. As both $f(q,\ell)$ 
and $f(p_i,\ell)$ are continuous curves, in order to intersect $f(p_j,\ell)$ 
to the right of the vertex $v$, $f(q,\ell)$ needs to intersect $f(p_i,\ell)$ 
once more. This leads to the contradiction to Observation \ref{pseudoline}.
\end{proof}

As the arc-segments (and hence the vertices) of $\cal F$ are ordered along 
the $x$-axis, we can perform a binary search with $f(q,\ell)$ to find the 
points $a$ and $b$ of its intersections with $\cal F$ to the left and right 
of $q$ respectively. Let $a'$ and $b'$ be the vertical projections of $a$ 
and $b$ respectively on the line $\ell$. We compute the circles passing 
through $q$ with center at $a'$ and $b'$ respectively, and choose the one 
with minimum radius as ${\cal C}_2(q,\ell)$. This needs $O(\log n)$ time in 
the worst case. 

We now explain the computation of ${\cal C}_1(q,\ell)$.  Let $v$ be any point 
on ${\cal F}(P)$ whose vertical projection $v'$ on $\ell$ lies between $a'$ 
and $b'$. Observe that, the line segment $\overline{vv'}$ intersects 
$f(q,\ell)$. Thus, the circle ${\cal C}(v',\ell)$ centered at $v'$ and 
radius $\overline{vv'}$ is color-spanning and contains $q$ in its interior.
Thus the desired ${\cal C}_1(q,\ell)$ (of minimum radius) will be centered 
at some point in the open interval $(a',b')$. Now, we need the following 
definition.

\begin{definition}
Each {\it vertex} of ${\cal F}(P)$ is either (i) the point of intersection 
of consecutive arc-segments in ${\cal F}(P)$, or (ii) the minima of some
arc-segment of ${\cal F}(P)$. 
\end{definition}

\begin{itemize}
\item If there exists a minimal color-spanning circle of type (i) defined by a 
pair of points $p_i,p_j \in P$, then its center is the point of intersection 
of the perpendicular bisector of $[p_i,p_j]$ with the line $\ell$; it can 
also be obtained by the vertical projection of the type (i) vertex, 
generated by the intersection of $f(p_i,\ell)$ and $f(p_j,\ell)$, on the 
line $\ell$ (for a fixed line $=\ell$, both the points are the same). 
\item If there exists a minimal color-spanning circle of type (ii) defined by a 
point $p_i \in P$, then its center is the projection of the
type (ii) vertex of ${\cal F}(P)$ corresponding to $f(p_i,\ell)$.  
\end{itemize}

Thus, in order to get ${\cal C}_1(q,\ell)$, we need to consider all the {\it 
vertices} of ${\cal F}(P)$ whose vertical projections on $\ell$ lie in 
$[a',b']$ and identify the one having a minimum vertical height of ${\cal F}(P)$ 
at that point from $\ell$. In the preprocessing stage, we create a 
height-balanced leaf-search binary tree $\cal T$ with the vertices of 
${\cal F}(P)$ ordered with respect to their $x$-coordinates. Each leaf node 
$v$ is attached with the {\it radius} $|\overline{vv'}|$ of the minimum 
color-spanning circle centered at $v'$. Each interior node of $\cal T$ 
contains the discriminant value of that node (the $x$-coordinate of its 
inorder predecessor) and the {\it minimum} radius stored in the subtree 
rooted at that node. 

At the query time, ${\cal C}_1(q,\ell)$ is obtained by identifying a vertex 
of ${\cal F}(P)$ with minimum attached radius among the vertices whose 
projection lies in $[a',b']$. This can be done in $O(\log n)$ time. Thus, 
we have the following result:
\begin{lemma}\label{res:cons_cssc}
A given set $P$ of colored points and a line $\ell$ can be preprocessed in 
$O(n\log n)$ time and $O(n\alpha(n))$ space such that given any arbitrary 
query point $q$, the minimum radius color-spanning circle centered on $\ell$ 
and containing the point $q$ (on the boundary or in its interior) can be 
computed in $O(\log n)$ time.
\end{lemma}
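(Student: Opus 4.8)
The plan is to assemble the machinery built above into a single preprocessing pass followed by an $O(\log n)$ query. During preprocessing I would first apply the reflection transform so that every point of $P$ lies above $\ell$ (treated as the $x$-axis); this costs $O(n)$ time and preserves the optimum, as argued just before Observation \ref{def1}. Next I would compute, for every color class $c$, the lower envelope $\mathcal{F}(P_c)$ of its distance curves; since these curves behave like pseudolines (Observation \ref{pseudoline}), each envelope has linear size and is obtained in $O(n_c\log n_c)$ time, for a total of $O(n\log n)$. I would then form the upper envelope $\mathcal{F}(P)$ of the $k$ lower envelopes; by the lemma bounding the complexity of $\mathcal{F}(P)$, it has $O(n\alpha(n))$ vertices and is built in $O(n\log n)$ time. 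These steps dominate and yield the claimed $O(n\log n)$ preprocessing time and $O(n\alpha(n))$ space.

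The remaining preprocessing is the search structure. I would store the vertices of $\mathcal{F}(P)$, sorted by $x$-coordinate, in a height-balanced leaf-search tree $\mathcal{T}$, where the leaf for a vertex $v$ carries the radius $|\overline{vv'}|$ of the minimal color-spanning circle centered at the projection $v'$, and each internal node stores its discriminant together with the minimum radius in its subtree. Since $\mathcal{F}(P)$ has $O(n\alpha(n))$ vertices, building $\mathcal{T}$ costs $O(n\alpha(n))$ time and space, still within the stated bounds.

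For the query I would treat the two circle types separately. For constrained $\tyt$ $\scsc$ ($q$ on the boundary), Lemma \ref{two_intersection} guarantees that $f(q,\ell)$ meets $\mathcal{F}(P)$ in exactly one point $a$ left of $q$ and one point $b$ right of $q$; because the arc-segments of $\mathcal{F}(P)$ are $x$-ordered and the sign of $f(q,\ell)-\mathcal{F}(P)$ changes exactly once on each side, a binary search over $\mathcal{T}$ locates $a$ and $b$ in $O(\log n)$ time, and the smaller of the two circles through $q$ centered at the projections $a'$ and $b'$ is $\mathcal{C}_2(q,\ell)$. For constrained $\to$ $\scsc$ ($q$ in the interior), the text shows that the optimal center lies in the open interval $(a',b')$; being minimal, that circle is realized at a vertex of $\mathcal{F}(P)$ (Observation \ref{def1} together with the type-(i)/type-(ii) vertex classification). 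Hence $\mathcal{C}_1(q,\ell)$ is obtained by a range-minimum query over the vertices whose projection lies in $[a',b']$, which $\mathcal{T}$ answers in $O(\log n)$ via the standard canonical-node decomposition. Returning the smaller of $\mathcal{C}_1$ and $\mathcal{C}_2$ completes the $O(\log n)$ query.

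The main obstacle I anticipate is justifying that the $\mathcal{C}_1$ answer really is a catalogued vertex of $\mathcal{F}(P)$ lying in $[a',b']$, i.e. that the minimum of the piecewise-parabolic upper envelope over that interval is attained at one of its vertices. The key is that the radius function is parabolic inside each arc-segment, so any interior minimum of an arc is exactly a type-(ii) vertex (the arc minima), while a minimum occurring at an arc junction is a type-(i) vertex; thus every candidate minimizer is a vertex stored in $\mathcal{T}$, and the range-minimum query cannot miss the optimum. Making this correspondence precise, and handling the endpoints of $[a',b']$ so that the strict-interior case ($\mathcal{C}_1$) and the on-boundary case ($\mathcal{C}_2$) are cleanly separated, is the only genuinely delicate point; everything else is complexity bookkeeping over the already-established envelope bounds.
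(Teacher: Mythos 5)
Your proposal is correct and follows essentially the same route as the paper: the same reflection and envelope preprocessing, the same leaf-search tree over the vertices of ${\cal F}(P)$ with subtree-minimum radii, and the same query combining the two boundary circles at $a'$, $b'$ with a range-minimum over the vertices projecting into $[a',b']$. The ``delicate point'' you flag is handled in the paper exactly as you suggest, by defining the vertices of ${\cal F}(P)$ to include both arc junctions (type (i)) and arc minima (type (ii)), so every candidate minimizer is catalogued in the tree.
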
 

\subsection{The general problem}
In this section, we consider a further constrained version of the $\scsc$ 
problem using similar technique as in Section \ref{cons_cssc}, and then use 
it to design the approximation algorithm for the general $\scsc$ problem. 

\subsubsection{Further constrained version}\label{fur_cons_cssc}

\vspace{0.1in}
\begin{siderules}
Given a colored point set $P$, preprocess them such that for a given query point $q$, one can compute the minimum 
radius color-spanning circle ${\cal C}(q,h(q))$ {\em containing $q$ on its
 boundary} which is centered on a horizontal line $h(q)$ passing through $q$. 
\end{siderules}

We propose an optimal algorithm for this problem. For each point $p_i \in P$, 
we construct the data structure ${\cal F}(P,p_i)$, where ${\cal F}(P,p)$ is 
the ${\cal F}(P)$ data structure with the line $\ell=h(p)$. We use ${\cal F}(P,\alpha)$
or ${\cal F}(P,h(\alpha))$ interchangeably to denote the ${\cal F}(P)$ data structure
with respect to line $h(\alpha)$ through a point $\alpha$. Given the query 
point $q$, our objective is to report ${\cal C}(q,h(q))$ by querying in the 
data structure ${\cal F}(P\cup\{q\},q)$. In other words, we need to inspect 
the two points of intersection of the distance curve $f(q,h(q))$ with the 
curve ${\cal F}(P,q)$. Note that, {\it the distance-curve $f(q,h(q))$ is 
composed of two half-lines of slopes ``$+1$'' and ``$-1$'' above the 
horizontal line $h(q)$, originating from $q$.} Our objective is to achieve 
the poly-logarithmic query time. Thus, it is not permissible to construct 
${\cal F}(P\cup\{q\},q)$ during the query. We now describe the method of 
handling this situation.

We identify two consecutive points $p_i, p_{i+1} \in P$ in the sorted order 
of the points in $P$ with respect to their $y$-coordinates satisfying 
$y(p_i) \leq y(q) \leq y(p_{i+1})$, and then use the data structure 
${\cal F}(P,p_i)$ to compute ${\cal C}(q,h(q))$ using the following result.

\begin{lemma}\label{order_same}
For a horizontal line $\ell$ between $h(p_i)$ and $h(p_{i+1})$, the order of 
the points in $P$ with respect to their distances from any point $z \in \ell$ 
remains the same as the order of the members in $P$ with respect to their 
distances from the point $z' \in h(p_i)$, where $z'$ is the vertical 
projection of $z$ on $h(p_i)$.
\end{lemma}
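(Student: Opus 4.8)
The plan is to recast the statement via perpendicular bisectors and then exploit that $p_i$ and $p_{i+1}$ are consecutive in the $y$-order. Let $\eta$ denote the height of the horizontal line $\ell$, fix a common abscissa $x_0$, and set $z=(x_0,\eta)$ and $z'=(x_0,y(p_i))$, so that passing from $z'$ to $z$ merely slides the centre vertically inside the closed strip bounded by $h(p_i)$ and $h(p_{i+1})$. For any two points $p_a,p_b\in P$, the quantity $dist(\cdot,p_a)^2-dist(\cdot,p_b)^2$ is \emph{affine} in the centre coordinates, since the quadratic part common to both squared distances cancels; its zero set is exactly the perpendicular bisector $B_{ab}$ of $[p_a,p_b]$ (the same linearisation that underlies the paraboloid lifting used for $\to$ $\scsc$). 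Hence the relative order of $p_a$ and $p_b$ seen from a moving centre flips precisely when the centre crosses $B_{ab}$, and the lemma is equivalent to the claim that the vertical segment $\overline{z'z}$ crosses no bisector $B_{ab}$, i.e. that $\overline{z'z}$ stays inside a single cell of the arrangement of all perpendicular bisectors.

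The first substantive step is to extract the structural consequence of consecutiveness: no point of $P$ has its $y$-coordinate strictly between $y(p_i)$ and $y(p_{i+1})$, so the open strip $S=\{(x,y):y(p_i)<y<y(p_{i+1})\}$ contains no site, whereas $\overline{z'z}\subset\overline S$. Here I would invoke Observation~\ref{pseudoline}: along any horizontal line the distance curves $f(p_a,\ell)$ and $f(p_b,\ell)$ meet exactly once, which pins down how $B_{ab}$ can cross the horizontal lines sweeping $S$. The aim is to show that the reorderings that actually shape the envelope ${\cal F}(P)$ — the breakpoints of each lower envelope ${\cal F}(P_c)$ and the change of the colour attaining the upper envelope — are witnessed on $\partial S$, where the sites sit, and hence cannot occur along $\overline{z'z}$ in the interior of the empty strip. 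Together with the first step this would yield the asserted invariance of the order.

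The main obstacle, which I expect to be the crux, is pairs that \emph{straddle} the strip (one point at height at most $y(p_i)$ and the other at height at least $y(p_{i+1})$), together with far-apart pairs on the same side: for these $B_{ab}$ is nearly horizontal and can cut across $S$ at some abscissa, so emptiness of the open strip alone does not forbid a crossing of $\overline{z'z}$. To handle them I would not reason about the ranking of all of $P$ at once but restrict attention to the comparisons that decide the answer — at a fixed $x_0$, the nearest point within each colour and the colour realising the maximum among them — and prove that these particular arg-minima remain fixed as the centre slides from height $y(p_i)$ to $\eta$, using that the centre never leaves $S$ while every site lies outside it. If a clean stability argument for straddling pairs resists, the fallback is to establish directly the weaker statement that the lemma is actually invoked for: the left-to-right sequence of arcs of ${\cal F}(P)$ built for $\ell$ coincides with the one built for $h(p_i)$ (only the breakpoint abscissae move), so the search tree precomputed for $h(p_i)$ still locates the correct arc for $\ell$, after which the exact radius for $\ell$ is recomputed in $O(1)$ per candidate.
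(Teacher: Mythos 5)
Your opening reduction---the order seen from a moving centre flips exactly when the centre crosses a perpendicular bisector $B_{ab}$, so the lemma amounts to saying that the vertical segment $\overline{z'z}$ crosses no bisector---is correct, and it is the geometric form of what the paper does algebraically (the paper expands $dist(z,p_k)^2-dist(z,p_\ell)^2$ and $dist(z',p_k)^2-dist(z',p_\ell)^2$ and tries to force a sign contradiction). But your proposal stops exactly where the work begins: the paragraph on straddling and far-apart same-side pairs correctly observes that emptiness of the strip does not prevent a bisector from cutting $\overline{z'z}$, and then only names two possible escapes (restrict to the per-colour arg-minima; or retreat to invariance of the arc sequence of ${\cal F}(P)$) without carrying either out. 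That unresolved case is not a technicality---it is where the statement actually fails. Take $y(p_i)=0$, $y(p_{i+1})=1$, and two further sites $p_a=(0,10)$, $p_b=(100,11)$, both far above the strip. Their bisector has slope $-100$ through $(50,10.5)$ and meets the vertical line $x=50.1$ at height $0.5$, inside the empty strip: from $z'=(50.1,0)$ the nearer site is $p_a$ (squared distances $2610.01<2611.01$), while from $z=(50.1,0.6)$ it is $p_b$ ($2598.17<2598.37$). So no argument can prove the full ordering claim, and your instinct that only a weaker, envelope-level statement is needed is the productive one---but that weaker statement is precisely what your write-up leaves unproven.

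For comparison, the paper's proof is the elementary computation you bypass: from $d_1>d_2$ and $d_1'<d_2'$ it reaches $-2\beta_1(y_k-y_\ell)>-2\beta_2(y_k-y_\ell)$ and concludes $\beta_1<\beta_2$. That last step divides by $-2(y_k-y_\ell)$ and is only valid when $y_k>y_\ell$; when the point farther from the upper centre is the lower of the two (as $p_a$ is above), the inequality gives $\beta_1>\beta_2$ and no contradiction arises. In other words, the configuration you flag as ``the crux'' is exactly the case the paper's algebra silently discards, so you should not expect to find the missing step there. A salvageable route is your own fallback, made precise: show that the witnesses the query actually consults---the nearest site of each colour at abscissa $\alpha$, and the colour attaining the upper envelope there---are unchanged as the centre slides across the empty slab, i.e.\ that ${\cal F}(P,\ell)$ has the same left-to-right arc sequence as ${\cal F}(P,p_i)$ with only the breakpoints drifting. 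As written, neither your argument nor the paper's establishes the lemma.
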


\begin{figure}
 \centerline{\includegraphics[scale=0.5]{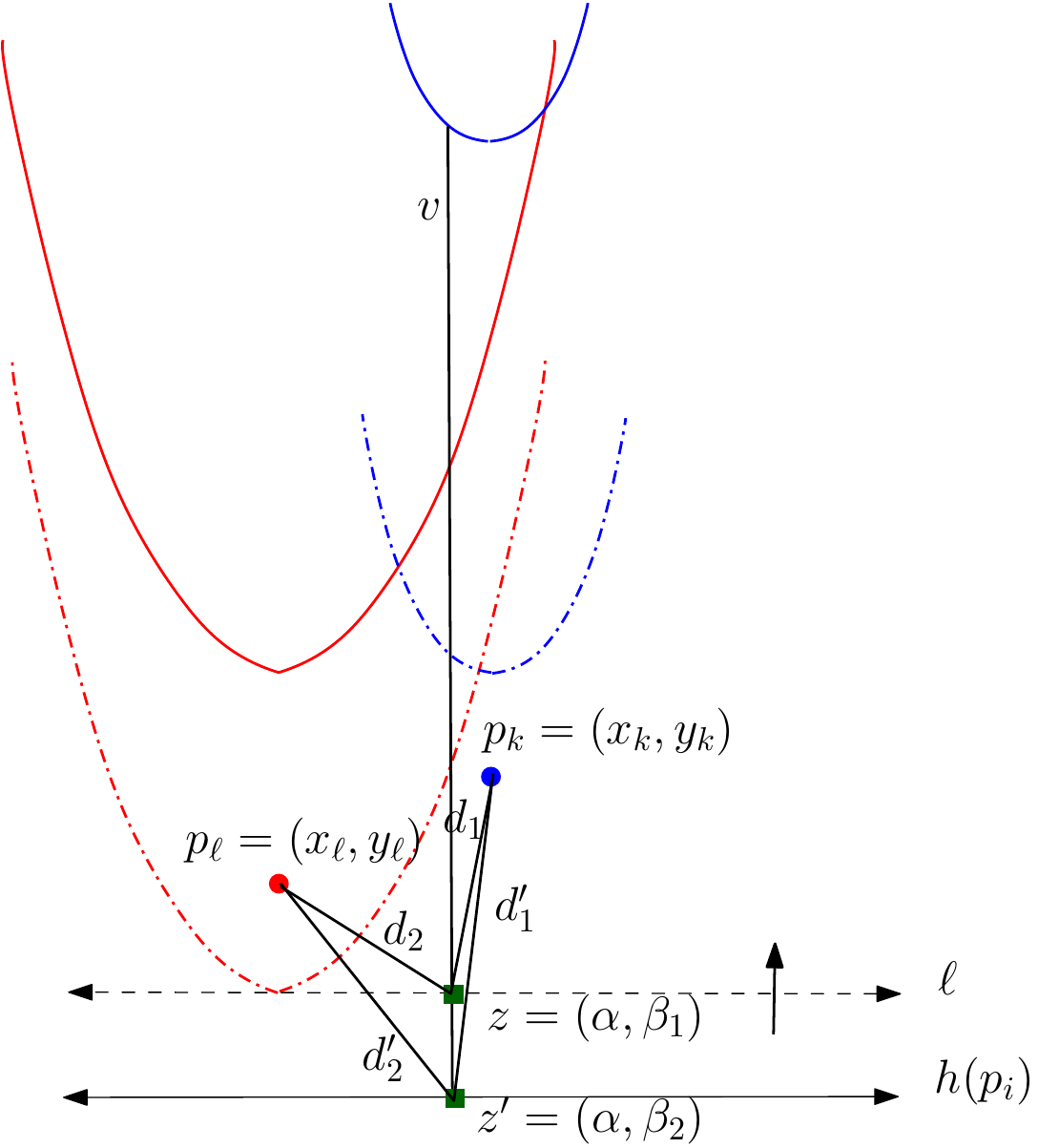}}
 \caption{Illustration of Lemma \ref{order_same}}
 \label{fig:order_same}
\end{figure}

\begin{proof}
We prove this result by contradiction. As the horizontal line $\ell$ lies 
in the horizontal slab bounded by  $h(p_i)$ and $h(p_{i+1})$, the 
coordinates of $z=(\alpha, \beta_1)$ and $z'=(\alpha,\beta_2)$ satisfy 
$\beta_1 > \beta_2$. Let $p_k=(x_k,y_k),p_\ell=(x_\ell,y_\ell) \in P$ be 
two points with $y_k \neq y_\ell$ (see Figure \ref{fig:order_same}). Let $d_1=\delta(p_k,z)$, $d_2=
\delta(p_\ell,z)$, $d_1'=\delta(p_k,z')$ and $d_2'=\delta(p_\ell,z')$, 
where $\delta(.,.)$ is the Euclidean distance between two points.  For 
the contradiction, let $d_1 > d_2$ and $d_1' < d_2'$. We also assume that  
$y_k, y_\ell,\beta_1,\beta_2 > 0 \text{ and } y_k=\min(y_k,y_\ell) > 
\beta_1>\beta_2$. 
\begin{itemize}
\item $d_1 > d_2$ implies that 
$(x_k-\alpha)^2 + (y_k -\beta_1)^2 >(x_\ell -\alpha)^2+(y_\ell -\beta_1)^2$; \\
implying $(y_k -\beta_1)^2 -(y_\ell -\beta_1)^2>(x_\ell -\alpha)^2-
(x_k -\alpha)^2=p$ (say), 
\item whereas $d_1' < d_2'$ implies that, 
$(x_k-\alpha)^2 + (y_k -\beta_2)^2 <(x_\ell-\alpha)^2 + (y_\ell -\beta_2)^2$;\\
implying $(y_k -\beta_2)^2 -(y_\ell -\beta_2)^2<p$.
\end{itemize}

Thus,   $(y_k -\beta_1)^2 -(y_\ell -\beta_1)^2>(y_k -\beta_2)^2 -(y_\ell -\beta_2)^2$,\\ implying 
  $-2\beta_1(y_k-y_\ell)>-2\beta_2(y_k-y_l)$ 
  $\Rightarrow$ $\beta_1<\beta_2$. \\ 
But this contradicts our initial assumption; $\beta_1>\beta_2$.
\end{proof}

Lemma \ref{order_same} says that the distance-curve segments in 
${\cal F}(P,\ell)$  change in a self-parallel manner as $\ell$ moves 
from $p_i$ to $p_{i+1}$. In other words, (i) the representative point 
of each color ($P_i$) lying in the smallest color-spanning circle 
remains same if its center $z$ is any point on the vertical line 
$\overline{z'z''}$, $z' \in h(p_i)$, $z'' \in h(p_{i+1})$, and 
$\overline{z'z''} \perp h(p_i)$, and (ii) the order of the points inside 
the smallest color-spanning circle with respect to their distances from 
the center $z$ remain same for all the points $z \in \overline{z'z''}$.

\begin{figure}[h]
\begin{center}
 \includegraphics[scale=0.5]{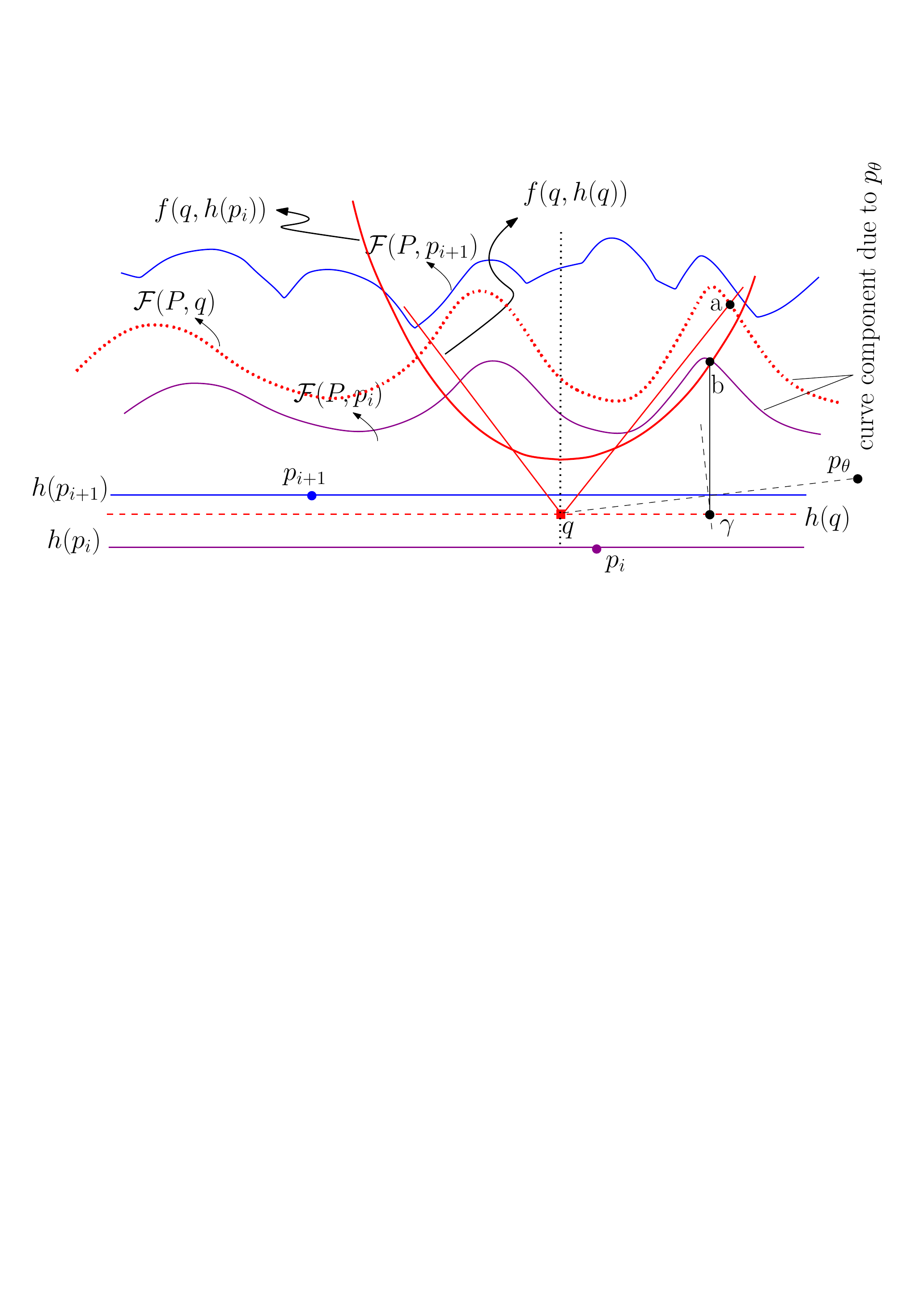}
 \caption{Schematic description of the generation of $\gamma$, the center of $C_1$: here $q$ satisfies $y(p_i)\leq y(q)\leq y(p_{i+1})$.}
 \label{fig:h(q)}
\end{center}
\end{figure}
 
Moreover, if the distance-curve $f(q,h(q))$ (resp. $f(q,h(p_i))$) intersects 
${\cal F}(P,q)$ (resp. ${\cal F}(P,p_i)$) at the point $a$, (resp. $b$) to the 
right side of $q$, then both the points $a$ and $b$ lie on the curve-component 
of the same point (say $p_\theta\in P$) in ${\cal F}(P,q)$ and ${\cal F}(P,p_i)$ 
respectively. Thus, both the smallest color-spanning circles passing through 
$q$ and centered on $h(q)$ and $h(p_i)$ respectively also pass through 
$p_\theta$. Thus, we can identify $f(p_\theta,h(q))\in {\cal F}(P,q)$ by 
performing binary search in ${\cal F}(P,p_i)$ with $f(q,h(p_i))$. The center 
of the smallest color-spanning circle $C_1$ passing through $q$ and centered 
on the line $h(q)$, is the point of intersection $\gamma$ of the perpendicular 
bisector of $p_\theta$ and $q$ and the line $h(q)$; its radius is  
$|\overline{q\gamma}|$ (see Figure \ref{fig:h(q)}). Similarly, we can compute another point $\gamma'$ 
on $h(q)$ to the left of $q$ such that the circle $C_2$ centered at $\gamma'$ 
and radius $\overline{\gamma'q}$ is also a minimal color-spanning circle 
passing through $q$. ${\cal C}_2(q,h(q))$ is the smallest one among $C_1$ and 
$C_2$. Thus,

\remove{
\begin{figure}[h]
\begin{center}
 \includegraphics[scale=0.5]{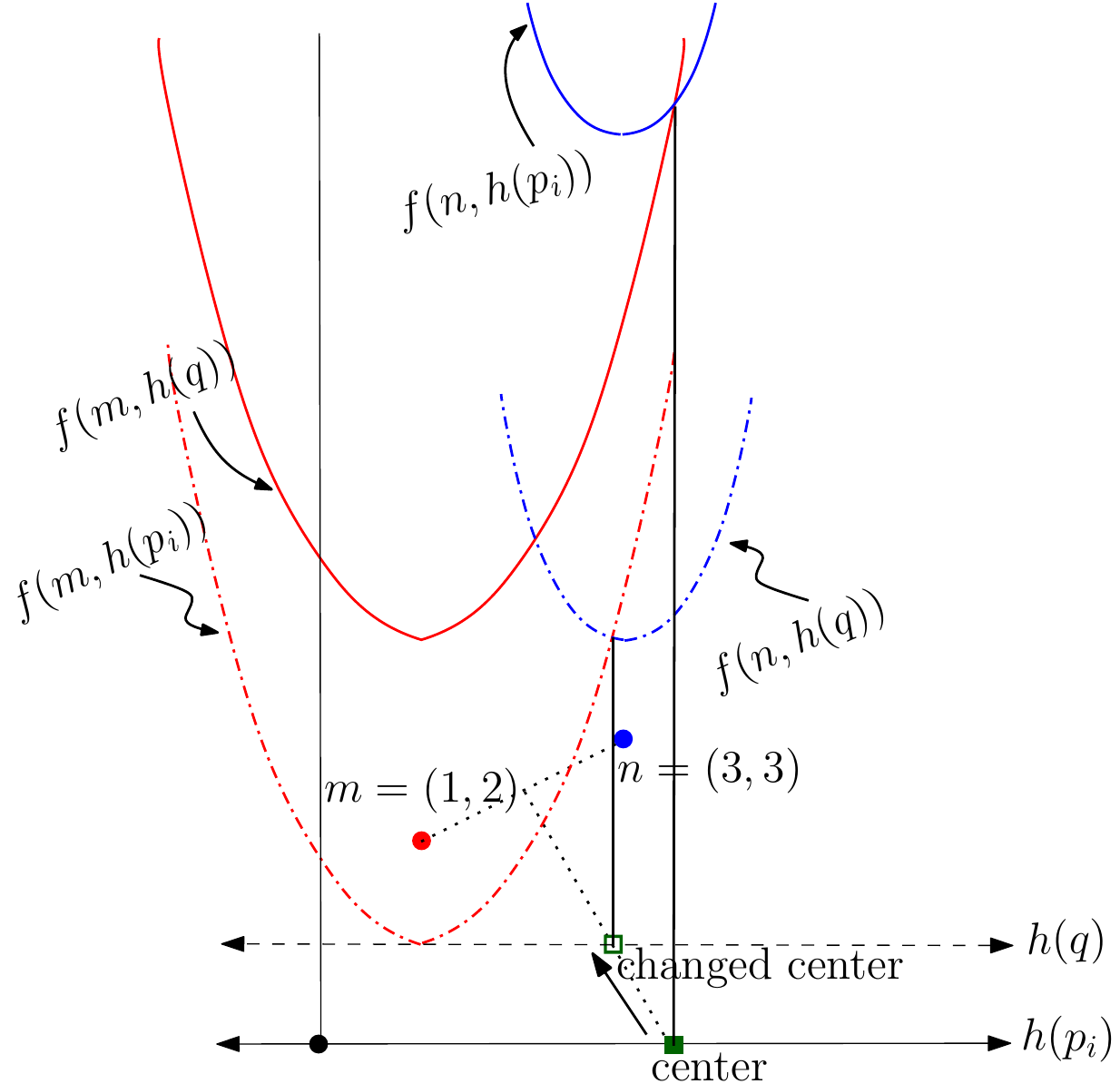}
 \caption{The center moves along the perpendicular bisector of $m$ and $n$.}
 \label{fig:dc_change}
\end{center}
\end{figure}
}

\begin{lemma}
A colored point set $P$ of size $n$, where each point has one of the $k$ given colors, can be preprocessed 
in $O(n^2\log n)$ time and $O(n^2\alpha(n))$ space such that for a query point $q$, the $\tyt$ $\scsc$ with 
center on $h(q)$ can be computed  in $O(\log n)$ time. 
\end{lemma}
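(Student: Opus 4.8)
The plan is to bound the three complexity measures separately, invoking the preprocessing analysis of Lemma~\ref{res:cons_cssc} for the resource bounds and the structural result of Lemma~\ref{order_same} for correctness. For the preprocessing, I would build, for each of the $n$ points $p_i \in P$, the data structure ${\cal F}(P,p_i)$ associated with the horizontal line $\ell = h(p_i)$. By Lemma~\ref{res:cons_cssc}, each such structure costs $O(n\log n)$ construction time and $O(n\alpha(n))$ space. Summing over all $n$ choices of $p_i$ yields the claimed $O(n^2\log n)$ preprocessing time and $O(n^2\alpha(n))$ space. In addition, I would keep the points of $P$ in an array sorted by $y$-coordinate at a cost of $O(n\log n)$, which is dominated.

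For the query, given $q$, I would first locate by binary search in the sorted array the two consecutive points $p_i, p_{i+1}$ with $y(p_i)\le y(q)\le y(p_{i+1})$, taking $O(\log n)$ time. The crucial idea is that, rather than constructing the ``live'' structure ${\cal F}(P\cup\{q\},q)$ at query time (which would be too expensive), I would query the already-built ${\cal F}(P,p_i)$. Concretely, I would binary-search in ${\cal F}(P,p_i)$ with the distance-curve $f(q,h(p_i))$ to find, to the right of $q$, the arc-segment governed by some point $p_\theta \in P$. As argued in the discussion preceding the lemma, this same $p_\theta$ is the point whose curve-component is met by $f(q,h(q))$ in ${\cal F}(P,q)$ to the right of $q$. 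The center $\gamma$ of the candidate circle $C_1$ is then the intersection of the perpendicular bisector of $[p_\theta,q]$ with $h(q)$, and its radius is $|\overline{q\gamma}|$, all computable in $O(1)$ time (see Figure~\ref{fig:h(q)}). A symmetric binary search to the left of $q$ yields $\gamma'$ and the circle $C_2$, and I would return the smaller of $C_1$ and $C_2$ as the sought $\tyt$ $\scsc$ centered on $h(q)$. The two $O(\log n)$ searches together with the $O(1)$ geometric computations give the claimed $O(\log n)$ query time.

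The heart of the argument, and the step I expect to be the main obstacle, is correctness: justifying that the point $p_\theta$ extracted from the precomputed structure ${\cal F}(P,p_i)$ is exactly the point that the unavailable structure ${\cal F}(P,q)$ would deliver. This rests on Lemma~\ref{order_same}, which shows that as the defining line sweeps vertically from $h(p_i)$ to $h(p_{i+1})$, the distance-ordering of the points of $P$ from any center on that line is invariant; hence the arc-segments of ${\cal F}$ translate in a self-parallel fashion across the slab. Consequently, the representative of each color nearest to a center on the line, and in particular the arc-segment first met by the slope-$(+1)$ half-line of $f(q,\cdot)$ to the right of $q$, is controlled by the same point $p_\theta$ for both lines $h(p_i)$ and $h(q)$. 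I would make this precise by tracking the identity of the relevant arc-segment throughout the slab and verifying that the insertion of $q$ itself (with $q \notin P$ and $y(q)$ strictly inside the slab) does not change which point of $P$ determines the intersection on either side of $q$, so that the circle built from $p_\theta$ is genuinely the minimum-radius color-spanning circle passing through $q$ with center on $h(q)$.
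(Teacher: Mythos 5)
Your proposal matches the paper's construction essentially step for step: building ${\cal F}(P,p_i)$ for every $p_i\in P$ to get the $O(n^2\log n)$ time and $O(n^2\alpha(n))$ space bounds, locating the slab $y(p_i)\le y(q)\le y(p_{i+1})$ at query time, binary-searching in ${\cal F}(P,p_i)$ with $f(q,h(p_i))$ to recover the same point $p_\theta$ that ${\cal F}(P,q)$ would yield, and justifying this transfer via Lemma~\ref{order_same} (the self-parallel shift of the arc-segments across the slab) before intersecting the perpendicular bisector of $[p_\theta,q]$ with $h(q)$ on each side. This is exactly the argument the paper gives, so no further comparison is needed.
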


\subsection{Unconstrained $\tyt$ $\scsc$}\label{uncons_cssc}
We now present an $(1+\epsilon)$-approximation algorithm for the unconstrained 
version of the $\tyt$ $\scsc$ problem. We draw the lines ${\cal L} = \{\ell_1, 
\ell_2, \ldots, \ell_{\frac{\pi}{\epsilon}}\}$ through the origin $(0,0)$ such 
that two consecutive lines make an angle $\epsilon$ at the origin (see Figure \ref{fig:c_orient}). For each of 
these lines $\ell_i$ as the $x$-axis, we construct the data structure 
${\cal D}\equiv {\cal F}(P)$ with the points in $P$, as described in 
Section \ref{cons_cssc}. Given the query point $q$, we can compute the 
constrained $\tyt$ $\scsc$ with center on the line $\ell_i(q)$ passing through 
$q$ and parallel to the line $\ell_i \in {\cal L}$ for each $i \in \{1,2,
\ldots, \frac{\pi}{\epsilon}\}$. Finally, report the one of minimum size, 
namely ${\cal C}_2(q)$. We now show the following:
\remove{
\begin{figure}[h]
\begin{center}
 \includegraphics[scale=0.6]{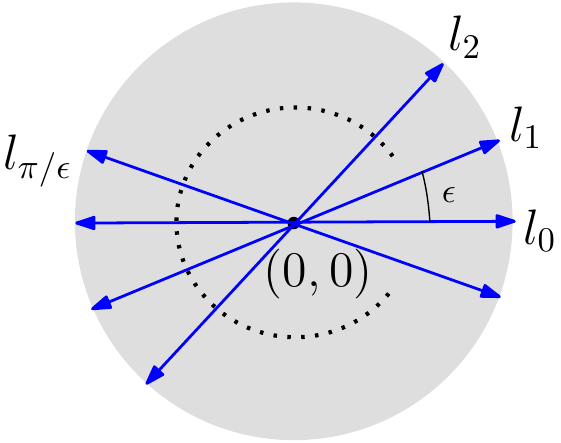}
 \caption{Dividing the plane into several orientations.}
 \label{fig:c_orient}
\end{center}
\end{figure}
}

\begin{figure}[h]
\begin{minipage}[b]{0.45\linewidth}
\centering
\centerline{\includegraphics[scale=0.6]{orientations}}
 \caption{Dividing the plane into several angular orientations.}
 \label{fig:c_orient}
\end{minipage} %
\begin{minipage}[b]{0.55\linewidth}
\centering
\centerline{\includegraphics[scale=0.6]{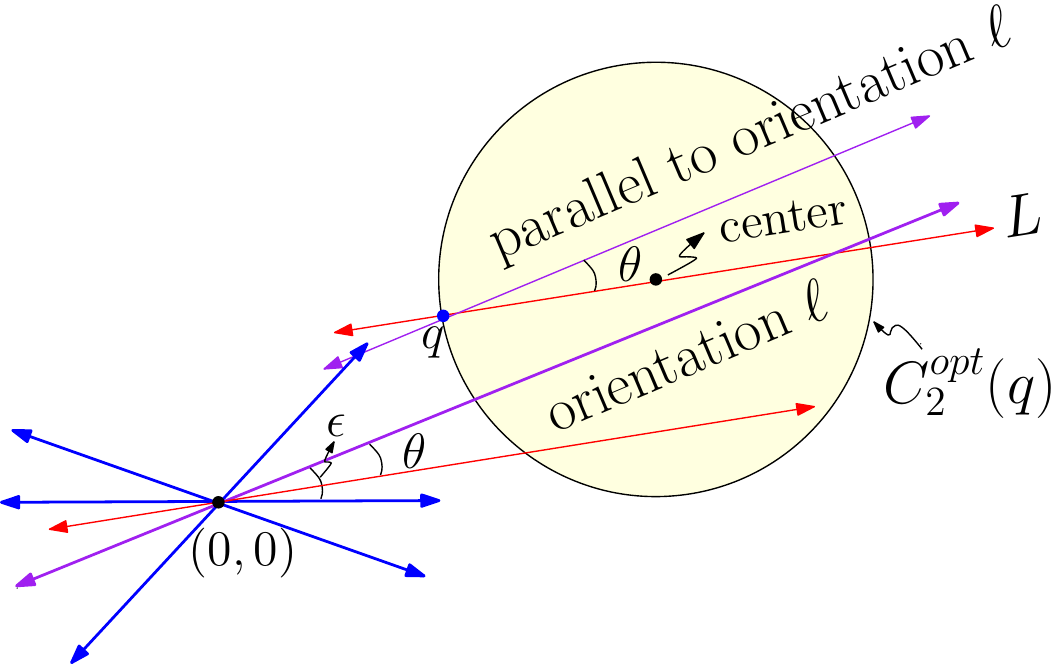}}
\caption{The circle ${\cal C}_2^{opt}(q)$.}
 \label{fig:l_orient}
\end{minipage} %
\end{figure}

\begin{lemma}\label{res:approx-scsc}
$\displaystyle\frac{radius({\cal C}_2(q))}{radius({\cal C}_2^{opt}(q))} \leq 1+\epsilon$, where ${\cal C}_2^{opt}(q)$
is the optimum unconstrained $\tyt$ $\scsc$ passing through $q$. 
\end{lemma}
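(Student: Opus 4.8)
The plan is to bound the approximation ratio by analyzing how much the optimal circle's center direction can differ from the nearest sampled orientation. Let $\mathcal{C}_2^{opt}(q)$ be the optimum unconstrained $\tyt$ $\scsc$ with center $c^{opt}$ and radius $r^{opt}$, passing through $q$ on its boundary. The center $c^{opt}$ lies on some line through $q$; consider the direction of the segment $\overline{q\,c^{opt}}$. Among the $\pi/\epsilon$ sampled orientations in $\mathcal{L}$, there is a line $\ell_i$ whose direction makes an angle at most $\epsilon/2$ with this direction. First I would set up the geometry so that $\ell_i(q)$, the line through $q$ parallel to $\ell_i$, is the candidate line on which our algorithm computes a constrained solution ${\cal C}_2(q,\ell_i(q))$, and argue that $radius({\cal C}_2(q)) \le radius({\cal C}_2(q,\ell_i(q)))$ since we report the minimum over all orientations.

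The key step is to exhibit a color-spanning circle centered on $\ell_i(q)$, passing through $q$, whose radius is at most $(1+\epsilon)\,r^{opt}$; optimality of the constrained solution then finishes the argument. The natural construction is to take the point $c'$ on $\ell_i(q)$ that is closest to $c^{opt}$, or alternatively the intersection of $\ell_i(q)$ with the perpendicular bisector of $q$ and a suitable defining point of $\mathcal{C}_2^{opt}(q)$. I would argue that the circle centered at this $c'$ and passing through $q$ still contains all the $k$ representative points that $\mathcal{C}_2^{opt}(q)$ contained, so it remains color-spanning, and then bound its radius. Because $q$ lies on both circles and $c'$ is on $\ell_i(q)$ at angular deviation at most $\epsilon/2$ from the true center direction, an elementary triangle/law-of-cosines estimate should give $|\overline{qc'}| \le r^{opt}/\cos(\epsilon/2)$, and since $\cos(\epsilon/2) \ge 1 - \epsilon/2 \ge 1/(1+\epsilon)$ for $0<\epsilon\le 1$, the radius bound $r^{opt}(1+\epsilon)$ follows.

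The main obstacle, and the step requiring the most care, is verifying that the shifted circle centered at $c'$ remains \emph{color-spanning}. Moving the center from $c^{opt}$ to $c'$ while keeping $q$ on the boundary enlarges the radius, so intuitively the new disk should be a superset covering all previously enclosed representative points; I would make this precise by showing that increasing the radius (with $q$ fixed on the boundary and the center moving outward along the ray direction) sweeps a region containing the old disk, or more carefully that each defining point stays within distance $|\overline{qc'}|$ of $c'$. One must check this containment does not fail for points on the far side of the disk from $q$, which is where the angular deviation interacts with the geometry. If a direct superset argument is delicate, the cleaner route is to define $c'$ as the intersection of $\ell_i(q)$ with the bisector of $q$ and the defining point of $\mathcal{C}_2^{opt}(q)$ realizing the radius, guaranteeing that defining point stays on the new circle, and then bound the displacement. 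Once containment is established, the radius estimate is routine trigonometry, so I expect essentially all the real work to sit in the color-spanning preservation claim.
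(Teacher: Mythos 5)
Your high-level plan matches the paper's: pick the sampled orientation $\ell_i$ nearest to the direction of $\overline{q\,c^{opt}}$, and exhibit a color-spanning witness circle centered on $\ell_i(q)$ of radius at most $(1+\epsilon)\,r^{opt}$, so that optimality of the constrained solution finishes the argument. The gap sits exactly at the step you yourself flag as delicate, and it cannot be repaired in the form you propose. You insist that the witness circle pass through $q$ on its boundary \emph{and} contain all the representative points of the optimal disk; the natural way to guarantee the latter is to make it a superset of $B(c^{opt},r^{opt})$, but no such circle exists once $\ell_i(q)\neq L$. Indeed, for any center $c'$ on $\ell_i(q)$, containing $B(c^{opt},r^{opt})$ in the disk of radius $|\overline{qc'}|$ around $c'$ requires $|\overline{c'c^{opt}}|+r^{opt}\le |\overline{qc'}|$, while the triangle inequality gives $|\overline{qc'}|\le |\overline{c'c^{opt}}|+r^{opt}$; equality forces $c^{opt}$ onto the segment $\overline{qc'}$, i.e., $c'\in L$. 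Concretely, with $c'$ the foot of the perpendicular from $c^{opt}$, the circle through $q$ has radius $r^{opt}\cos\theta<r^{opt}$ (it shrinks, it does not grow to $r^{opt}/\cos\theta$), so representative points on the far side of the optimal disk can be lost. Your fallback --- choosing $c'$ on the bisector of $q$ and the single defining point realizing the radius --- pins down that one point but says nothing about the other $k-1$ colors.

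The paper's proof avoids this by dropping the requirement that the witness pass through $q$: it takes $d$ to be the foot of the perpendicular from $c^{opt}$ onto $\ell_i(q)$ and the circle $D$ centered at $d$ of radius $r^{opt}(1+\sin\theta)$. Since $|\overline{dc^{opt}}|=r^{opt}\sin\theta$, every point of the optimal disk is within distance $r^{opt}(1+\sin\theta)$ of $d$, so $D$ contains all of ${\cal C}_2^{opt}(q)$ --- hence it is color-spanning and contains $q$ --- and $radius(D)=r^{opt}(1+\sin\theta)\le r^{opt}(1+\theta)\le r^{opt}(1+\epsilon)$. The constrained query on $\ell_i(q)$, which handles circles containing $q$ on the boundary or in the interior, then returns an answer no larger than $radius(D)$. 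To salvage your argument you must make the same move: construct a witness that merely \emph{contains} $q$, rather than one with $q$ on its boundary.
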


\begin{proof}
 Without loss of generality assume that $radius({\cal C}_2^{opt}(q))=1$, 
 and its center lies on the line $L$ passing through $q$. 
 Observe that, we have reported ${\cal C}_2(q,\ell_i(q))$ for some line 
 $\ell_i(q)$ which makes an angle $\theta$ with the line $L$ where
 $\theta \leq \epsilon$ (see Figure \ref{fig:l_orient}).
We justify that $radius({\cal C}_2(q,\ell(q)))\leq 1+\epsilon$ by showing that there exists a color spanning circle $C$ with center on $\ell(q)$ and radius 
$\rho \leq 1+ \epsilon$. We construct the circle $C$ as follows (see Figure \ref{fig:construct}):
 
\remove{
 \begin{figure}[h]
\begin{center}
 \includegraphics[scale=0.6]{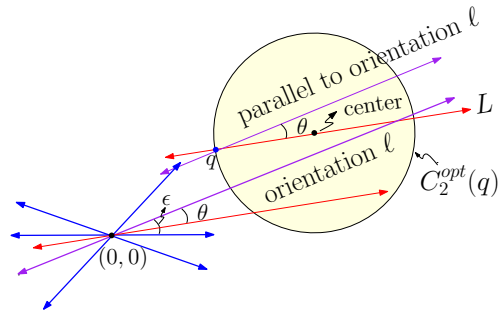}
 \caption{The circle ${\cal C}_2^{opt}(q)$.}
 \label{fig:l_orient}
\end{center}
\end{figure}
}

\begin{figure}[htbp]

\begin{minipage}[b]{0.45\linewidth}
\centering
\centerline{\includegraphics[scale=0.75]{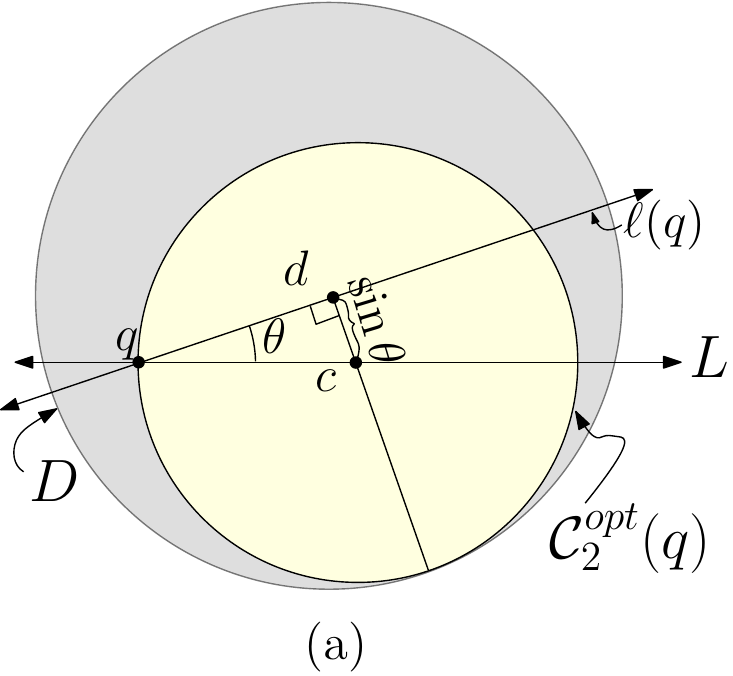}}
\end{minipage} %
\begin{minipage}[b]{0.55\linewidth}
\centering
\centerline{\includegraphics[scale=0.9]{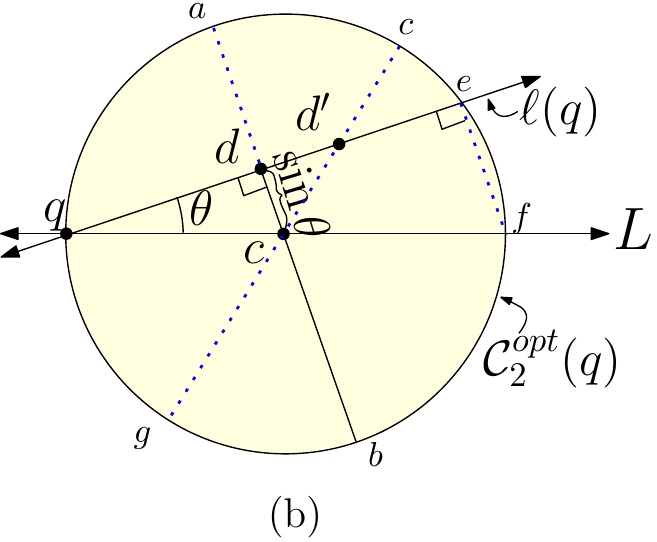}}
\end{minipage} %
\caption{Construction of $D$} 
\label{fig:construct}
\end{figure}

Draw a perpendicular line from the center $c$ of the circle ${\cal C}_2^{opt}(q)$ on the line $\ell(q)$ (parallel to $\ell_i \in {\cal L}$). 
Let it intersect the line $\ell(q)$ at point
$d$ (see Figure \ref{fig:construct} (a)). Draw a circle $D$ centered at $d$ with radius $1+\sin\theta$.
Observe that,
\begin{enumerate}
 \item $\mathbf{D}$ {\bf contains} $\mathbf{{\cal C}_2^{opt}(q)}$: since $radius(D)=1+\sin\theta$, and the construction shows that the 
 radius of $D$ subsumes the radius of ${\cal C}_2^{opt}(q)$. 
 \item $\mathbf{radius(D) = 1+\sin\theta ~\leq~ 1+\theta~ \leq~1+\epsilon}$: since $\epsilon$, and hence $\theta$, is very small (i.e., $\epsilon \longrightarrow 0$). 
 \item $\mathbf{D}$ {\bf is the smallest circle containing} $\mathbf{{\cal C}_2^{opt}(q)}$ {\bf centered on} $\mathbf{\ell(q)}$: This can be proved by contradiction.
 Let us consider that there is another circle $D'$ centered at the point $d'(\neq d)$ on the line $\ell(q)$ that contains
 ${\cal C}_2^{opt}(q)$, and have a smaller radius than $D$ (see  Figure \ref{fig:construct}(b)). We need to show that
 $radius(D') = r''= \overline{d'g}> 1+\sin\theta$.  We have $\overline{d'c}>\overline{dc} = \sin\theta$.
 Thus, $r''=1+\overline{d'c} > 1+\sin\theta = r' = radius(D)$.
\end{enumerate}
Hence, the claim follows.
\end{proof}

From Lemma \ref{empty-query}, \ref{res:cons_cssc} and \ref{res:approx-scsc}, we have the following result: 
\begin{theorem}
Given a set of $n$ colored points, the $(1+\epsilon)$ smallest color spanning circle containing query point $q$ can be reported in $O(\log n)$ time, using a data structure of size $O(n^2\alpha(n))$, built in $O(n^2k \log n)$ time.
\end{theorem}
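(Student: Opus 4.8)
The plan is to answer the query by computing, independently, the best $\to~\scsc$ (the query point in the interior) and the best approximate $\tyt~\scsc$ (the query point on the boundary), and returning the one of smaller radius. Since the optimum $\scsc$ containing $q$ falls into exactly one of these two cases, reporting the smaller of the two candidates is correct. The $\to$ part is handled verbatim by the data structure ${\cal T}_1$ of Lemma~\ref{empty-query}, which reports the smallest precomputed minimal color-spanning circle containing $q$ in its interior. It then remains to approximate the $\tyt$ part and to account for the combined cost.

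For the $\tyt$ part I would discretise the orientation of the line carrying the centre. The optimum $\tyt$ circle is centred on some line $L$ through $q$ whose direction is unknown; I replace $L$ by the nearest of the $\frac{\pi}{\epsilon}$ equally spaced directions $\{\ell_1,\ldots,\ell_{\pi/\epsilon}\}$ and solve the constrained problem for each. For a single direction $\ell_i$, rotating so that $\ell_i$ becomes horizontal reduces the task to the constrained query of Section~\ref{cons_cssc}, but with the subtlety that the admissible line $\ell_i(q)$ must pass through $q$ rather than being fixed in advance. This is precisely the further-constrained setting: for direction $\ell_i$ I precompute, for every point $p_j\in P$, the envelope structure ${\cal F}(P,p_j)$ of Lemma~\ref{res:cons_cssc} associated with the line through $p_j$ parallel to $\ell_i$. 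At query time I locate the consecutive pair $p_j,p_{j+1}$ with $y(p_j)\le y(q)\le y(p_{j+1})$ in the rotated frame and answer using ${\cal F}(P,p_j)$; Lemma~\ref{order_same} guarantees that the nearest representative of each colour, and hence the intersection structure of $f(q,\ell_i(q))$ with ${\cal F}(P,\cdot)$ used in Lemma~\ref{two_intersection}, is the same on $h(q)$ as on $h(p_j)$, so the $O(\log n)$ binary search of Lemma~\ref{res:cons_cssc} applies unchanged. Running this over all $\frac{\pi}{\epsilon}$ directions and keeping the smallest radius yields ${\cal C}_2(q)$, and Lemma~\ref{res:approx-scsc} certifies $radius({\cal C}_2(q))\le(1+\epsilon)\,radius({\cal C}_2^{opt}(q))$.

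The costs then add up as follows. Each direction contributes $n$ copies of the Lemma~\ref{res:cons_cssc} structure, one per line $h(p_j)$, for $O(n^2\log n)$ preprocessing time and $O(n^2\alpha(n))$ space; since $\frac{\pi}{\epsilon}$ is a constant, the $\tyt$ part stays within these bounds, and adding the $\to$ structure of Lemma~\ref{empty-query} gives the stated $O(n^2k\log n)$ time and $O(n^2\alpha(n))$ space. The $\tyt$ query consists of $\frac{\pi}{\epsilon}$ constrained queries of $O(\log n)$ each, hence $O(\log n)$; together with the $\to$ query of Lemma~\ref{empty-query} and a final radius comparison we obtain the query bound of the theorem.

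I expect the main obstacle to be the query-time handling of the line $\ell_i(q)$ through the unknown point $q$: the fixed-line structure of Lemma~\ref{res:cons_cssc} cannot be invoked directly, and the crux is to argue, via the self-parallel movement of the distance curves in Lemma~\ref{order_same}, that the structure built for the sandwiching line $h(p_j)$ returns the same colour representatives as the (unavailable) structure ${\cal F}(P\cup\{q\},q)$. This is what lets a single $O(\log n)$ search suffice per orientation without rebuilding any envelope at query time, and it is the step I would write out most carefully.
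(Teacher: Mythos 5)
Your proposal follows essentially the same route as the paper: the paper's own "proof" is just the one-line combination of Lemma~\ref{empty-query} (the $\to$ structure), the constrained/further-constrained envelope machinery of Sections~\ref{cons_cssc}--\ref{fur_cons_cssc} built for each of the $\frac{\pi}{\epsilon}$ orientations, and the approximation ratio of Lemma~\ref{res:approx-scsc}, which is exactly your decomposition. Your explicit observation that the fixed-line structure alone does not suffice and that one must use the per-point envelopes ${\cal F}(P,p_j)$ together with Lemma~\ref{order_same} to handle the line through the unknown query point is precisely the role of the further-constrained version in the paper, so the argument matches.
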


\section{Conclusion}
In this paper, we studied the color spanning versions of various localized query problems. To the best of our knowledge, this color spanning variation of localized query problem has not been studied yet. This type of problems has a lot of applications in real life, especially in the {\em facility location}. We hope  this will attract a lot of researchers to study further variations of this problem. For the query version of the $\scsc$ problem, obtaining an exact solution in sublinear query time is an open problem.

\subsection*{Acknowledgment:} The authors acknowledge an important suggestion given by  Michiel Smid for solving the $\scsc$ problem. 

\bibliography{mybibfile}

\end{document}